\documentclass[journal]{IEEEtran}
\IEEEoverridecommandlockouts
\usepackage{biblatex}
\usepackage{siunitx}
\usepackage{amsmath,amssymb,amsfonts,amsthm}
\usepackage{algorithm}
\usepackage[noend]{algpseudocode}
\usepackage{graphicx}
\usepackage{placeins}
\usepackage{makecell}
\newcommand{\joshua}[1]{{\color{black}{#1}}}
\newtheorem{theorem}{Theorem}

\usepackage{url}
\usepackage{cuted}
\newtheorem{assumption}{Assumption}
\usepackage{subcaption}
\usepackage{tikz}
\newcommand\copyrighttext{%
	\footnotesize  \copyright 2025 IEEE. Personal use of this material is permitted. Permission from IEEE must be obtained for all other uses, in any current or future media, including reprinting/republishing this material for advertising or promotional purposes, creating new
	collective works, for resale or redistribution to servers or lists, or reuse of any copyrighted
	component of this work in other works.
}
\newcommand\copyrightnotice{%
	\begin{tikzpicture}[remember picture,overlay]
		\node[anchor=south,yshift=10pt] at (current page.south) {\fbox{\parbox{\dimexpr\textwidth-\fboxsep-\fboxrule\relax}{\copyrighttext}}};
	\end{tikzpicture}%
}
\usepackage{textcomp}
\usepackage{bm}
\def\BibTeX{{\rm B\kern-.05em{\sc i\kern-.025em b}\kern-.08em
    T\kern-.1667em\lower.7ex\hbox{E}\kern-.125emX}}
\begin{document}

\title{Large-scale workflow placement in serverless
	computing using integer nonlinear programming
}

\author{Joshua Adamek$^{1}$, Natalie Carl$^{2}$, Trever Schirmer$^{2}$, Moritz Heinlein$^{1}$, David Bermbach$^{2}$
	and Sergio Lucia$^{1}$
\thanks{*This work has been funded by the Deutsche Forschungsgemeinschaft (DFG, German Research Foundation) - Projektnummer 495343202.}
\thanks{$^{1}$ Joshua Adamek, Moritz Heinlein and Sergio Lucia are with the Chair of Process Automation Systems at Technische Universität Dortmund, Emil-Figge-Str. 70, 44227 Dortmund, Germany.}%
\thanks{$^2$ Natalie Carl, Trever Schirmer, and David Bermbach are with the Scalable Software Systems Research Group at Technische Universität Berlin, Einsteinufer 17, 10587 Berlin.}
}

\maketitle
\copyrightnotice
\begin{abstract}
	
	Serverless edge computing has become a powerful cloud framework that enables the execution of large workflows without the need for the user to manage the underlying servers and edge devices. In this work, we address the challenge of deploying these workflows on \joshua{a large number of different }existing servers and edge devices such that monetary costs for the users and workflow evaluation times are minimized. To this end, the workflow and cloud node attributes are modeled in a mathematical framework. As a result, we \joshua{present a novel model of} 
	the optimal placement problem as a nonlinear integer program. To solve both the issues of scaling towards a larger number of cloud/edge nodes as well as decomposed knowledge of node attributes, we propose a \joshua{novel} decomposition strategy. In a case study, we show \joshua{the beneficial scaling properties of the decomposition approach and a mean improvement of $10\%$ against a simple deployment heuristic.}
\end{abstract}

\begin{IEEEkeywords}
Cloud computing; Optimal Scheduling; Integer programming
\end{IEEEkeywords}

\section{Introduction}
\label{sec:introduction}

Serverless computing has emerged as a promising cloud computing paradigm for Internet of Things (IoT) applications \cite{rajan_review_2020}. As the user interaction with the cloud platform is decoupled from the underlying physical servers, this offers flexibility and easier usage. This general idea motivates a Functions-as-a-Service (FaaS) architecture~\cite{yilmaz_multivocal_2020}, where deployment requires only functions and their data or software dependencies, while placement on physical devices is managed entirely by the cloud management system~\cite{eismann_state_2022}.  
FaaS platforms have gained attention in both research~\cite{barcelona-pons_faas_2019} and practice~\cite{sahraei_xfaas_2023}, particularly for IoT applications involving large numbers of dependent functions, also referred to as workflows~\cite{mahgoub2021sonic}. The decomposition of workflows into functions and their deployment represents one of the primary use cases for FaaS~\cite{wang2024briskchain}.
\begin{figure}[htbp]
	\centering
	\includegraphics[width=0.45\textwidth]{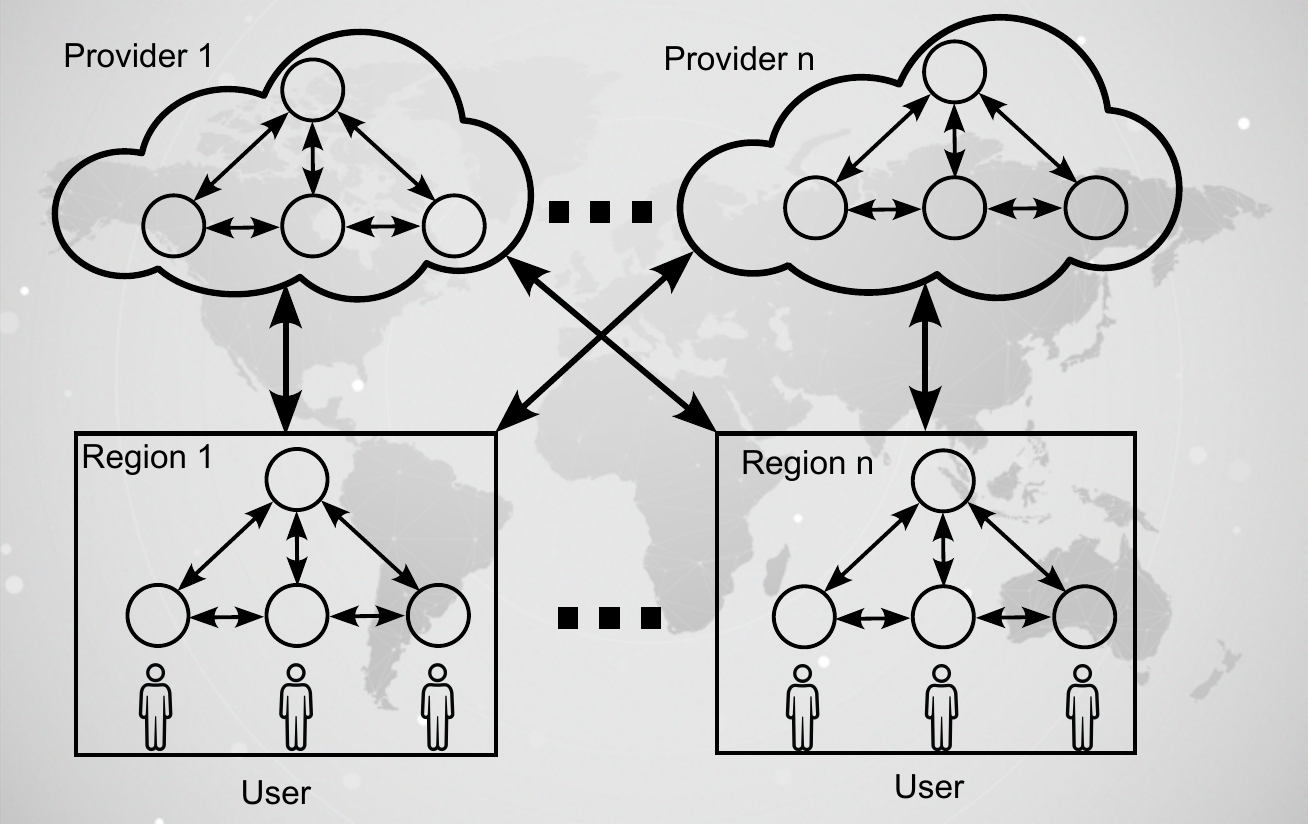}
	\caption{\joshua{Decentralized} serverless architecture considered in this work. \joshua{We consider a mixture of cloud, fog and edge computation nodes of different providers and locations as well as different attributes of the considered nodes.} \joshua{Cloud nodes are clustered by providers, edge/fog nodes by region of operation.}}
	\label{fig:overview}
\end{figure}
\joshua{
	Past literature distinguishes between the workflow deployment operation, commonly referred to as the \emph{placement problem} \cite{GHORBIAN2024103291}, and the scheduling operation, where a deployed workflow is selected for each user request. Both workflow placement and scheduling have been extensively studied in the literature \cite{GHORBIAN2024103291}. To efficiently utilize cloud resources, placement is typically optimized with respect to objectives such as latency, execution time, monetary cost, resource utilization, and energy efficiency \cite{GHORBIAN2024103291,ristov_2022,szalay_real-time_2023}.
	
	As described in \cite{GHORBIAN2024103291}, existing placement approaches can be broadly classified into machine learning-based \cite{xu2023stateful,WANG2024242,raza2023configuration}, heuristic-based \cite{DEHURY2024111179,smith}, and model-based methods \cite{costless,tarot,neptune,auctionwhisk,luo2024efficient,MOAKHAR2024103890,zahed2025efficient}. Heuristic approaches scale well to large workflow and cloud infrastructures but generally produce suboptimal solutions \cite{GHORBIAN2024103291}. Consequently, they have been widely adopted in practical serverless platforms such as XFaaS \cite{sahraei_xfaas_2023} and ECO \cite{alarbi2024eco}. Machine learning-based and model-based approaches, on the other hand, can provide optimal or near-optimal placements but face scalability challenges. Learning-based methods struggle with representing large workflow and infrastructure states and require substantial training data, typically targeting infrastructures with only $10$--$50$ nodes \cite{xie_workflow_2023}. Nevertheless, machine learning has also been successfully applied to large-scale serverless systems for improving placement and scheduling decisions based on runtime observations \cite{DEHURY2024111179}.
	
	Model-based approaches formulate workflow placement as an optimization problem based on mathematical models of workflows and cloud infrastructures. Existing formulations commonly represent workflows as directed acyclic graphs \cite{majewski} and cloud infrastructures as node-based networks \cite{xu2023stateful}. Two major challenges remain. First, the computational complexity of solving these optimization problems typically grows exponentially with the number of workflow functions and cloud nodes, limiting their applicability to large-scale deployments. While previous work has addressed the decomposition of large workflows \cite{yue}, cloud decomposition has received considerably less attention \cite{gallego2023machine}, particularly in the context of optimal placement. As FaaS infrastructures evolve from centralized cloud deployments toward geographically distributed cloud, fog, and edge environments spanning multiple providers \cite{alarbi2024eco,pfandzelter_tinyfaas_2020}, scalable optimization over large numbers of compute nodes becomes increasingly important.
	
	Second, the actual placement cost depends on the stochastic sequence of user requests and the resulting workflow selection during runtime \cite{palade}. Since modeling all possible request scenarios is computationally intractable \cite{palade,GHORBIAN2024103291}, many optimization-based approaches neglect the scheduling phase and instead optimize surrogate objectives such as latency or execution time \cite{auctionwhisk}. Furthermore, many optimization methods rely on shortest-path formulations for DAG-based workflows \cite{costless,wang}. Although recent work has proposed linear optimization models \cite{jia}, there remains a gap in modeling more complex workflow and infrastructure characteristics, including parallel workflow branches, data dependencies, and communication costs, within a nonlinear optimization framework. Advances in mixed-integer nonlinear optimization solvers, such as Gurobi \cite{gurobi} and BARON \cite{tawarmalani_convexification_2002}, now make such formulations computationally tractable while naturally supporting multi-objective optimization \cite{marler_weighted_2010}.
	
	This paper presents a mathematical framework for workflow placement in large-scale serverless architectures. To address the identified research gaps, the main contributions are:
	\begin{itemize}
		\item We propose a mathematical model for the workflow placement problem for which we can provide a hierarchical decomposition strategy that addresses the scalability and privacy challenges arising in large-scale decentralized serverless architectures. The proposed approach exhibits logarithmic growth in computational demand with respect to the number of cloud nodes.
		\item Into this a mathematical model we explicitly incorporate the workflow selection strategy. To account for waiting times during online execution, we introduce utilization-based costs for both compute nodes and workflow deployments.
		\item Additionally, we incorporate recently proposed improvements to the workflow modeling, such as data dependencies, parallel branches of workflows and data transmissions costs into this framework.
		\item The placement problem is formulated as a multi-objective optimization problem and we analyze the trade-offs between competing objectives using Pareto analysis.
		\item The scalability and performance of the proposed approach is evaluated through extensive simulation studies. Additionally, we analyze real world applicability of the approach through robustness analysis against incorrect parameter estimation during deployment.
	\end{itemize}
}

The remainder of this paper is organized as follows: Section II introduces the modeling framework, including attributes for workflows and nodes, Section III formulates the optimal placement problem and presents a decomposed solution enabling scalable algorithms for distributed clouds. Section IV provides case studies demonstrating advantages such as scalability to larger problems, adaptability in multi-objective settings, and comparison to heuristic approaches. Section V concludes with final remarks and outlook.

\section{Modeling of placement and selection of workflows}

\subsection{Workflow modeling}
In the following, we establish the modeling framework for the workflows considered in this work.
\joshua{Previous work commonly models workflows as directed acyclic graphs \cite{costless,wang}. We follow this general representation, but adapt it to the requirements of the proposed MINLP formulation for workflow placement in decentralized serverless architectures. In particular, the model is designed such that placement decisions, data transfers, dependencies between functions, and parallel workflow branches can be represented explicitly and later decomposed according to the regional structure of the infrastructure. We also include parallel workflow branches, as considered in \cite{wang}. The novelty of the modeling step therefore lies not in the graph-based workflow representation itself, but in its integration into a decomposable placement formulation for large-scale FaaS infrastructures.}
A workflow $s$ consists of $n_{\text{func}}$ functions. Each workflow is deployed to the cloud/edge structure $n_{\text{deployments}}$ times.  
Each function is stateless~\cite{eismann_state_2022}, meaning that all required data dependencies and inputs are provided externally, and a function is executed only once within a workflow. We assume that every function in a workflow is executed each time the workflow is invoked.
The workflow structure is illustrated in Fig.~\ref{fig:workflow}. We allow for $\joshua{n_{\text{branches}}}$ different branches within each workflow. To define a workflow, we introduce variables $b_{c,m}$, which equals one if function $m \in \{0,\ldots \, n_{\text{func}}-1\}$ belongs to branch $c \in \{0,\, \joshua{n_{\text{branches}}-1}\}$. Furthermore, we set $a_{m,\tilde m} = 1$ if function $\tilde m$ follows function $m$, indicating that function $m$ must be executed before data can be transferred between these two functions.
Each function has an estimated computational demand $R_{\text{func},m}$, defined by its computation time $t_{\text{func},m}$ and memory usage $\text{RAM}_m$. This demand is assumed to be fixed such that
\begin{align*}
	R_{\text{func},m}=\text{RAM}_m t_{\text{func},m}\, ,
\end{align*}
which is treated as a constant value.

\begin{figure}[htbp]
	\centering
	\includegraphics[width=0.35\textwidth]{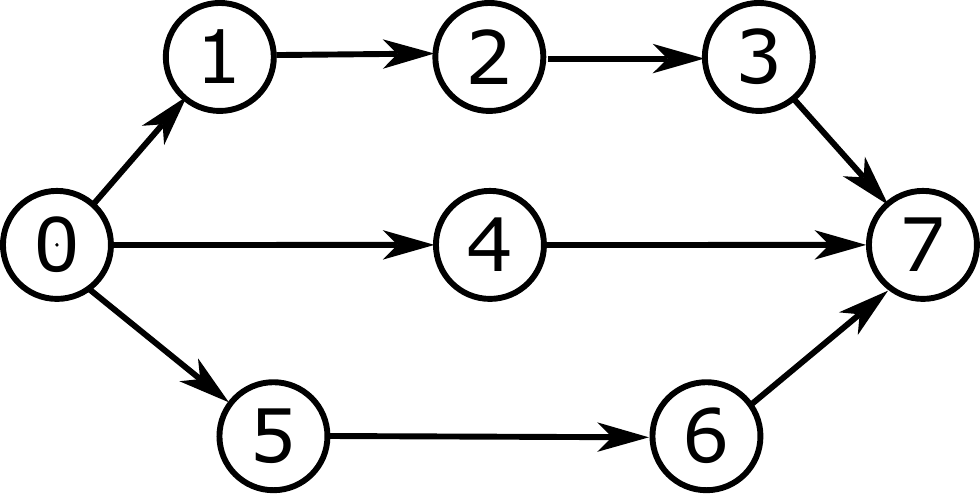}
	\caption{Structure of a workflow with eight functions and three branches. In this example, $b_{0,1}=1$ indicates that function~1 belongs to branch~0, while $a_{1,2}=1$ shows that function~2 succeeds function~1.}
	\label{fig:workflow}
\end{figure}
In addition to computational demand, each function sends data $d_{\text{sen\joshua{d}}}$ to its successor within the respective branch once execution finishes. The first function of a workflow receives input data $d_{\text{in}}$ from the user invoking the workflow. Figure~\ref{fig:workflow} illustrates these data flows required for completing the workflow. Furthermore, each function might require dependency data $d_{\text{\joshua{data}}}$, which are stored physically on nodes within the serverless architecture.

\subsection{Serverless infrastructure modeling}
\label{sub:serverless_modeling}
\joshua{We consider a hierarchical serverless edge-fog-cloud architecture rather than a fully peer-to-peer connected infrastructure. The modeled infrastructure contains all compute nodes that may host workflow functions, including edge nodes close to users, optional fog or regional nodes, and cloud nodes operated by one or multiple providers. User devices are modeled as request origins and are considered candidate nodes for function placement as we assume that these leave nodes explicitly provide computational resources. Edge and fog nodes may represent on-premises or regionally deployed resources, while cloud nodes are typically grouped according to their provider. The proposed model is agnostic to the ownership of individual nodes. Differences between providers, regions, and node types are captured through their latency, cost, speedup, and resource attributes. However, with this particular grouping, we ensure that every provider node attributes have to be shared to the overall architecture, providing a realistic real world setting for the serverless architecture.}
The task addressed by the placement problem is to assign each function $m$ of every workflow $n$ to a physical device within the serverless structure, referred to as a node. We define the serverless structure as a set of nodes $i \in \{0,\ldots,\joshua{n}_{\text{nodes}}\}$. These nodes are divided into edge nodes and cloud nodes. In this modeling framework, cloud nodes provide higher computational power, which can reduce the computation time of a function $m$. In our approach, we assume that executing functions on cloud nodes incurs monetary cost, whereas computations on edge nodes are considered free, but the modeling approach is not limited to this assumption. 
Furthermore, we consider multiple participants in the cloud environment. Each cloud node belongs to an entity (e.g. company or provider) that defines its attributes. For each provider, we assume a constant price to execute one megabyte for one second on its devices, denoted by $p_{\text{RAM}}$. This assumes a linear pricing model for the number of CPUs involved in executing a function. Additionally, we define a constant price for sending or receiving one megabyte of data as $p_{\text{send}}$, and a cost for transferring dependency data required for function execution as $p_{\text{\joshua{data}}}$ per megabyte.

\begin{figure}
	\centering
	\begin{subfigure}[t]{.21\textwidth}
		\centering
		\includegraphics[width=\textwidth]{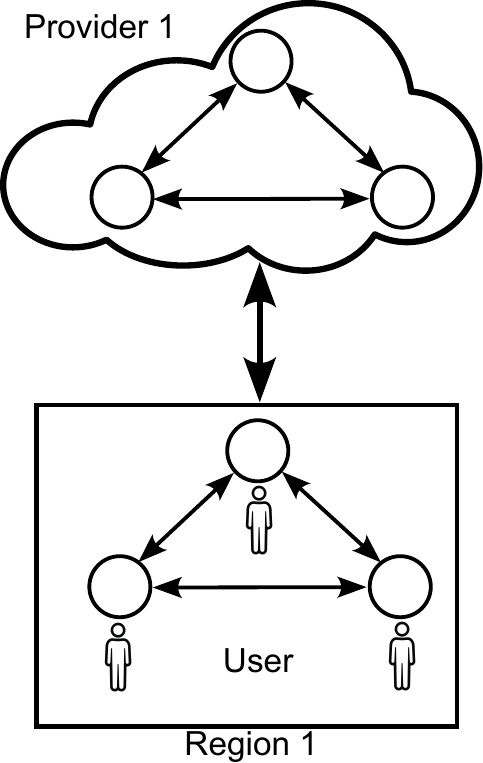}
		\caption{Assumed model of connections in the serverless architecture.}
		\label{fig:lat_corrrect}
	\end{subfigure}
	\begin{subfigure}[t]{.21\textwidth}
		\centering
		\includegraphics[width=\textwidth]{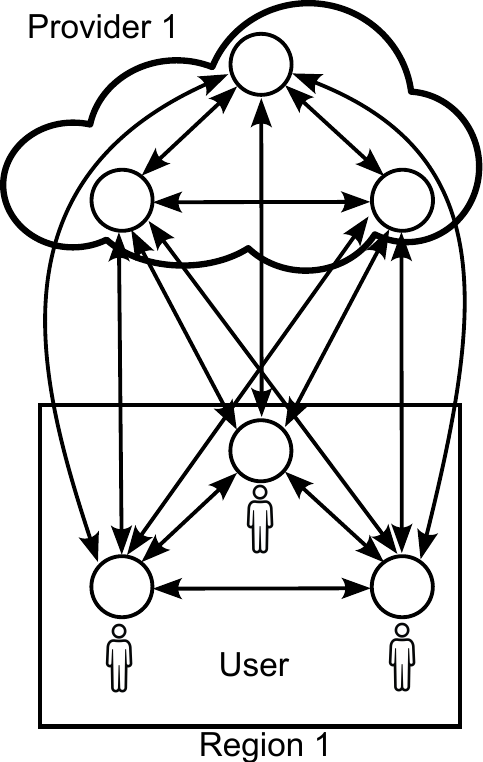}
		\caption{Non-scalable model of connections in the serverless architecture.}
		\label{fig:lat_incorrect}
	\end{subfigure}
	\caption{General connection structure of the serverless architecture. Cloud providers and edge nodes are distinguished. Requests from users originate from edge nodes only. Cloud nodes are clustered by provider or region. Connections between providers or regions exist through single links, as shown on the left. A fully connected topology, as shown on the right, is neither efficient nor realistic when different providers are involved.}
	\label{fig:lat}
\end{figure}
\joshua{The resulting topology is represented as a hierarchy of regions and providers. Nodes within the same region or provider can communicate through local links, whereas communication between different regions or providers is routed through representative head nodes. These head nodes may correspond to physical gateways, provider endpoints, or logical aggregation points used by the orchestration layer. Consequently, the architecture is not assumed to be fully connected at the physical node level. Instead, the hierarchical representation reflects practical decentralized deployments in which detailed information is locally available within regions or providers, while only aggregated information is exchanged between them.}
Figure~\ref{fig:lat} illustrates the proposed structure and connectivity of the serverless architecture using a simplified example. The bottom level represents user nodes at the edge from which requests originate during workflow execution.
Each connection between node $i$ and node $j$ is characterized by an estimated latency $L_{i,j}$. As shown in Fig.~\ref{fig:lat_incorrect}, an idealized setting would allow all nodes to be fully connected so that data could be transferred freely among them. In practice, especially with many providers and users, a separation occurs as depicted in Fig.~\ref{fig:lat_corrrect}. It is reasonable to assume that regions or providers can communicate only through single head node links connecting their respective top-level representatives.

This leads naturally to a hierarchical decomposition into regions and providers. In the simple example, there are two levels in the hierarchy: level zero contains two head nodes (region one and provider one), while level one includes three subordinate nodes per region/provider. We now establish notation for indexing these hierarchical levels in general form with levels indexed by $l \in \{0,\ldots,n_l-1\}$. At each level $l$, we consider $n_{\text{regions},l}$ regions/providers indexed by $r \in \{0,\ldots,n_{\text{regions},l}-1\}$, each containing $n_{\text{nodes},l,r}$ individual nodes indexed by $i \in \{0,\ldots,n_{\text{nodes},l,r}-1\}$. The number of regions at level $l$ is therefore the number of nodes at level $(l-1)$:
\begin{align*}
	n_{\text{regions},l}=\sum_{s=0}^{n_{\text{regions},l-1}}n_{\text{nodes},l-1,r}\, .
\end{align*}

In this setting, latencies are defined for every connection between two nodes within the same region such that $L_{l,r,i,j}$ denotes the latency between nodes $i$ and $j$ in region $s$ at level~$r$.
To navigate within the regional structure we define the following four functions: Given a node $i$ in level $l$ (that is the overall node number in the entire level over all regions in this level), we define the function $r=\text{curr\_region}(l,i)$ that outputs the current region of this node, while $\tilde i=\text{curr\_node}(l,i)$ is the index of the node within this region (that is the local node number in this region starting from the index zero in each region). Likewise, for a region $s$ in level $r$, we define the successor node $\tilde i=\text{succ\_node}(l,r)$ in level $l-1$, which is the index of the node in the successor region $\tilde r=\text{succ\_region}(l,r)$.
The latencies between two nodes $i$ and $j$ can be calculated according to Algorithm~\ref{alg:lat}.
\begin{algorithm}[tbp]
	\caption{Latency calculation for regionally structured serverless architectures.}\label{alg:lat}
	\begin{algorithmic}
		\Require Indices $i$ and $j$ of the physical devices
		\State $L_{i,j}=0$ and $l=n_l-1$
		\State $\tilde i=\text{curr\_node}(l,i)$, and $\tilde j=\text{curr\_node}(l,j)$
		\State $s_i=\text{curr\_region}(l,i)$ and $s_j=\text{curr\_region}(l,j)$ 
		\While{$s_i\neq s_j$} 
		\State $L_{i,j}=L_{i,j}+L_{l,r_i,0,\tilde i}+L_{l,r_j.0,\tilde j}$
		\State $\tilde i=\text{succ\_node}(l,r_i)$ and $\tilde j=\text{succ\_node}(l,r_j)$
		\State $r_i=\text{succ\_region}(l,r_i)$ and $r_j=\text{succ\_region}(l,r_j)$
		\EndWhile
		\State $L_{i,j}=L_{i,j}+L_{l,r_i,\tilde i,\tilde j}$
	\end{algorithmic}
	
\end{algorithm}
\joshua{\subsection{Operation modeling}
	We model the considered architecture as a serverless FaaS environment, where resource management, orchestration, deployment, and workflow selection are handled by the serverless platform. We assume that multiple cloud providers and edge or fog regions are jointly organized through a logical orchestration layer. During placement, a deploying user requests the deployment of one or multiple workflows. The orchestration layer then decides on which nodes the individual functions of each workflow are placed. In a centralized setting, this decision is made by one logical entity with access to the complete infrastructure model. In the decomposed setting, placement decisions are distributed across the hierarchy, such that regional or provider head nodes solve local subproblems using local information.
	For every individual deployed workflow that is for the number of instances of one specific workflow we solve one placement problem, to reduce the complexity of the underlying optimization problem.
	
	During runtime, users invoke workflows through requests originating from leaf nodes of the architecture. These leaf nodes may also provide computational resources and can therefore be part of the placement decision. For each request, the orchestration layer applies a selection policy that determines which deployed workflow instance is used. Thus, the user only submits deployment and invocation requests, while placement, selection, and execution are managed by the serverless platform.
	After the initial placement we assume no rescheduling of the placed workflow instances during the online scheduling phasse.
}

\section{Solution strategy}

\subsection{Decision variables and assumed problem structure}
\joshua{In the following we setup the optimization problem for the placement of the workflows in the proposed serverless architecture. There already exists linear mixed-integer optimization model formulations \cite{jia}. However in this work, we consider more complex workflow and infrastructure characteristics, including parallel workflow branches, data dependencies, and communication costs, which require formulating the optimization as a mixed-integer nonlinear problem. Since mixed-integer nonlinear programs are computationally demanding, we establish a decomposition approach to enable deployment on larger serverless architectures. }To set up the placement problem, the optimization variables defining the placement must be established.
Since each workflow $t$ is deployed separately, we omit the index $t$ in our problem formulation for brevity.
Each function $m$ of workflow $n$ must be placed on a node $i$ within the serverless architecture. Therefore, we define the decision variable $P_{n,m,i}$ to be one if this function is placed on node~$i$, and zero otherwise. For every function, exactly one node is selected, such that
\begin{align*}
	\sum_{i}P_{n,m,i}=1\, ,
\end{align*}
for all functions $m$ in workflow~$n$.
The main difficulty in describing the cost function arises from the fact that total cost does not directly result from workflow placement itself but rather from user selection of workflows during execution. Consequently, we aim to minimize the expected selection cost incurred by users invoking workflows. 
To model user behavior, we define a variable $H_{k,n}$ indicating which workflow is selected by user~$k$. In general, this variable is not known in advance. It represents a runtime decision made during operation. Optimal choices may differ depending on request conditions. Modeling such behavior leads to a two-stage stochastic optimization problem~\cite{schultz_twostage_1996}
, where placement variables $P_{n,m,i}$ would need to be optimized for all possible realizations of $H_{k,n}$. Since integer-based two-stage stochastic problems are computationally demanding~\cite{schultz_twostage_1996}
, they are unsuitable for large-scale serverless architectures.

We therefore assume a deterministic selection policy $H_{k,n}$, which we determine prior to execution and include as decision variables in the placement problem, ensuring that exactly one workflow is selected for each user request:
\begin{align*}
	\sum_{n}H_{k,n}=1\, ,
\end{align*}
for every possible user~$k$. In turn, this policy cannot react to the stochastic nature of the user requests, leading to possible blocking of resources or deployed functions, amplifying the need to address this issue in the placement problem as proposed in the next subsection.

\subsection{Problem definition}
\label{sub:problem_statement}
After modeling the workflows and nodes in the serverless architecture, we next define costs and constraints for the placement problem. 
We place each workflow separately, that is, each workflow $\joshua{s}$ is placed by solving the following optimization problem.
We consider two competing objectives that influence the placement: the time required to evaluate a workflow and the monetary cost incurred by a user when executing it.
\begin{table}[htbp]
	\centering
	\caption{Overview of the model parameters used in the main optimization problem~\eqref{eq:main}.}
	\label{tab:parameters}
	\begin{tabular}{ll}
		\hline
		Notation&Definition\\
		\hline
		$k$& Index used to address leaves\\
		$i$,$j$& Indices used to address general nodes in the cloud\\
		$l$& Index used to address levels in the serverless architecture\\
		$m$& Index used to address functions within a workflow\\
		$n$ &Index used to address different deployments\\& of the same workflow\\
		$r$ &Index used to address a region in a level\\& in the serverless architecture\\
		$c$& Index used to address the branches of the workflow\\
		$H_{k,n}$& Variable selecting workflow $n$ for a request from leaf $k$\\ 
		$P_{n,m,i}$&Variable placing function $m$ \\ & from workflow $n$ on the node $i$\\ 
		$C_{m,i}$&Cost to execute function $m$ on node $i$\\
		$D_{m,i,j}$& Cost to send data generated by function $m$\\& from node $i$ to node $j$\\
		$T_{n}$&Maximum time to execute workflow $n$\\
		$T_{m,i}$ &Time to execute function $m$ on node $i$\\
		$L_{\joshua{l,r},i,j}$& Latency between node $i$ and $j$\\& in region $s$ in level $r$ of the serverless architecture.\\ & If only one region is considered this is shorten to $L_{i,j}$.\\
		$p_{\text{RAM}}$& Cost to execute one MB of RAM for one second\\
		$p_{\text{send}}$&Cost to send $1$ MB of output data from a function\\
		$p_{\text{data}}$&Cost to transfer $1$ MB of data\\& needed for the execution of a function\\
		$d_{\text{data}}$&Amount of data\\& needed for the execution of a function\\
		$d_{\text{send}}$&Amount of output data send from a function\\
		$t_{\text{func},m}$&Estimated running time of function\\& $m$ on a normal device\\
		$\text{RAM}_m$& Estimated RAM usage of function $m$\\
		$R_{\text{func},m}$&Constant computational demand of function $m$\\
		$n_{\text{func}}$& Number of functions in a workflow\\
		$\joshua{n_{\text{branches}}}$& Number of branches of workflow\\
		$n_{\text{deployments}}$&Number of deployments of workflow $s$\\
		$b_{c,m}$&Indicates if function $m$ is in branch $c$\\
		$a_{m,\tilde m}$&Indicates if function $\tilde m$ is a successor of function $m$\\
		$x_{i,m}$&Estimated speedup of large functions \\& deployed on cloud node $m$\\
		$n_{l}$&Number of levels considered in\\& the serverless architecture\\
		$n_{\text{regions},l}$&Number of regions considered in level $l$\\& in the serverless architecture\\
		$n_{\text{nodes},l,r}$& Number of nodes in region $r$ on level $l$\\& in the serverless architecture\\
		\hline
	\end{tabular}
\end{table}

Using the modeling introduced in the previous section, these costs can be expressed as functions of the placement and selection decision variables, as well as parameters describing nodes and workflows (summarized in Table~\ref{tab:parameters}). We denote by $T_{n,c}$ the time required to evaluate branch~$c$ of workflow~$n$. The overall evaluation time depends on the slowest branch among all $\joshua{n}_{\text{branches}}$ branches of a workflow. To include this dependency in the optimization problem, we introduce an auxiliary variable $T_n$, constrained by
\begin{align*}
	T_n \geq T_{n,c} \quad \forall c \in \{0,\joshua{n}_{\text{branches}}\}\, .
\end{align*}
Each branch time $T_{n,c}$ is computed as the sum of function evaluation times for all functions~$m$ within that branch and latencies between nodes hosting consecutive functions. Consequently, we can formulate
\begin{align*}
	T_{n,c} &= \sum_{m}\sum_{i}(b_{c,m}P_{n,m,i}t_{m,i}\\
	&+ \sum_{\tilde m}\sum_{j}P_{n,m,i}P_{n,\tilde m,j}a_{m,\tilde m}b_{c,m}L_{i,j})\, .
\end{align*}
\joshua{The execution time $t_{m,i}$ of function $m$ on node $i$ is determined by the base execution time $t_{\text{func},m}$ multiplied with a potential speedup $x_{m,i}$ of executing the function on node $m$
	\begin{align*}
		t_{m,i}=x_{m,i}t_{\text{func},m}\, .
	\end{align*}
}
Note that latency introduces bilinear terms into this expression, as the placement variables $P_{n,m,i}$ and $P_{n,\tilde m,j}$ are multiplied.
\begin{figure}[htbp]
	\centering
	\includegraphics[width=0.42\textwidth]{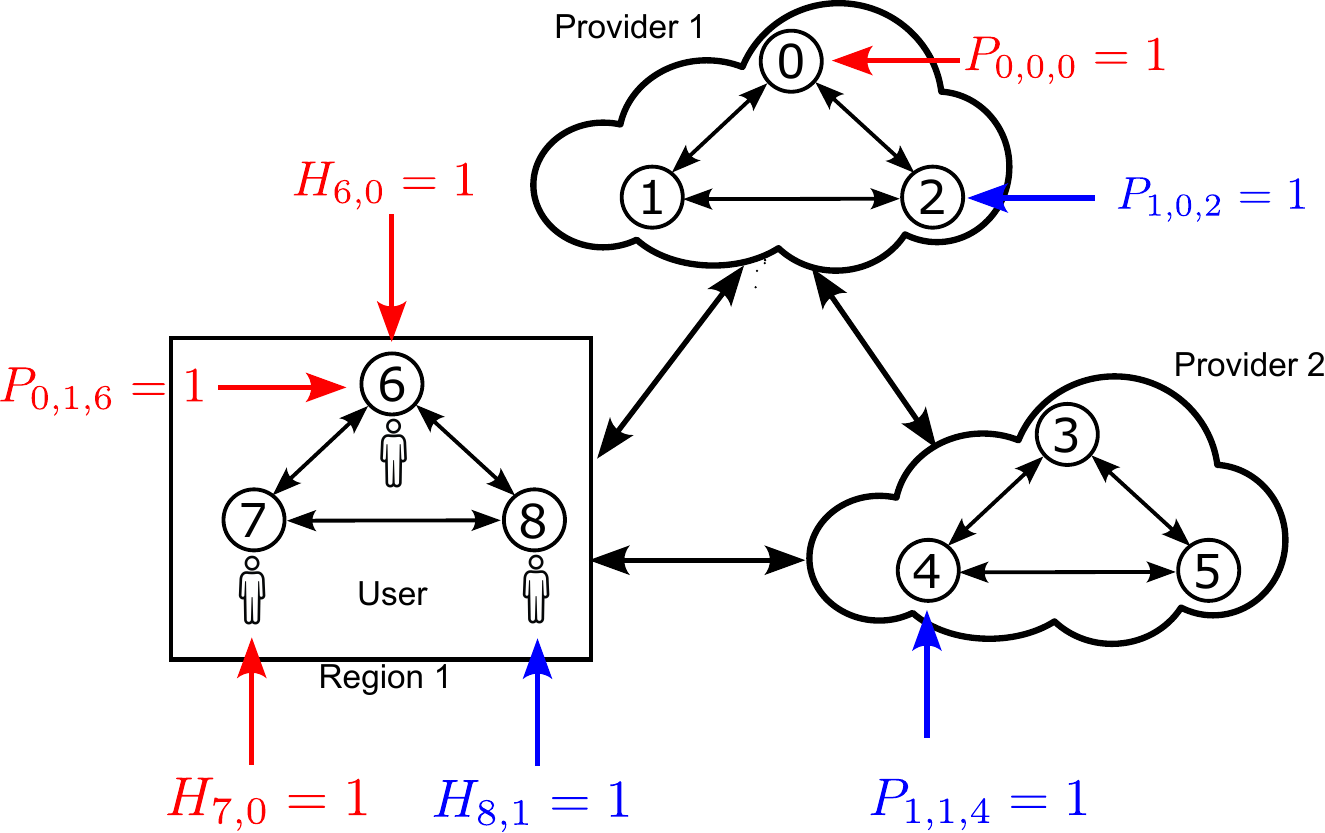}
	\caption{Overview of the centralized optimization strategy on the example of a workflow consisting of two functions placed two times on an architecture with three edge nodes and two providers with three physical devices. The placement variables $P$ show which function of which workflow is placed on which device, while the selection variable $H$ defines which workflow will be selected from a user node. The first two users (node 6 and 7) select the first workflow (variables in red) while the third user (node 8) select the second workflow (variables in blue).}
	\label{fig:example_centralized}
	
\end{figure}
Analogously, monetary costs consist of two components: a linear part representing direct execution costs $C_{m,i}$ for running function~$m$ on node~$i$, and a bilinear part describing data transfer costs between adjacent functions deployed on different nodes. We denote $C_{m,i}$ the sum of the execution cost and the potential additional cost for transferring data to node~$i$
\begin{align*}
	C_{m,i}= p_{\text{RAM},i}R_{\text{func},m}+p_{\text{data},i}d_{\text{data},m}\, .
\end{align*}
If the transfer data is already available on node $i$, we formulate $p_{\text{data},i}=0$.
Additionally, data is transferred between two functions located on different nodes, represented by
\begin{align*}
	D_{m,i,j}= (p_{\text{send},i}+p_{\text{send},j})d_{\text{sen\joshua{d}},m}\, .
\end{align*}
\joshua{Therefore, we can introduce the bilinear monetary variable for sending data as
	\begin{align*}
		S_{n,m,i}=\sum_{\tilde m}\sum_{j} P_{n,m,i}P_{n,\tilde m,j}a_{m,\tilde m} D_{m,i,j} \,\, \forall (n,m,i).
	\end{align*}
}

\begin{figure*}
	\begin{align}
		\label{eq:main}
		\begin{split}
			\underset{P,T,H,S}{\operatorname{min}} \, &\underbrace{w_1\sum_{k}\sum_{n}\lambda_k H_{k,n}\left( P_{n,m_{\text{start}},i}D_{\text{in},i}+P_{n,m_{\text{end}},i}D_{m,i,k}+\sum_{i}\sum_{m}P_{n,m,i}C_{m,i}+S_{n,m,i}\right)}_{\text{Monetary costs }M}\\
			+&\underbrace{w_2\sum_{k}\sum_{n}\lambda_k H_{k,n}\left( T_{n}+\sum_{i}\left(P_{n,m_{\text{start}},i} L_{k,i}+P_{n,m_{\text{end}},i}L_{k,i}\right)\right)}_{\text{Basic Time cost }B}
			+\underbrace{w_3\left(\sum_{n}\sum_{m}(u_{n,m})^2+\sum_{i}M_i^2\right)}_{\text{Utilization cost }U}\\
			&\text{subject to:}\\
			&\underbrace{\sum_{i} P_{n,m,i}=1 \,\,\forall (n,m)\quad \sum_{n} H_{k,n}=1 \,\,\forall (k)}_{\text{Placement and selection constraint}} \quad\underbrace{S_{n,m,i}=\sum_{\tilde m}\sum_{j} P_{n,m,i}P_{n,\tilde m,j}a_{m,\tilde m} D_{m,i,j} \,\, \forall (n,m,i)}_{\text{Monetary bilinear variable}}\\
			&\underbrace{T_n\geq\sum_{m}\sum_{i} (b_{c,m}P_{n,m,i}t_{m,i}+\sum_{\tilde m}\sum_{j} P_{n,m,i}P_{n,\tilde m,j}a_{m,\tilde m} b_{c,m} L_{i,j})\,\, \forall (n,c)}_{\text{Maximum branch time constraint}}\\
			&\underbrace{M_i=\frac{\sum_k\sum_n\sum_m \lambda_k H_{k,n} P_{n,m,i}R_{\joshua{\text{func},m}}}{\text{RAM}_{\text{max},i}}+M_{0,i}\, \forall (i)}_{\text{Node utilization}}\quad \underbrace{u_{n,m}=\sum_k \lambda_k H_{k,n} P_{n,m,i}t_{m,i}\, \forall (\joshua{n,m})}_{\text{Function utilization}}\\
		\end{split}
	\end{align}
\end{figure*}
In addition to these two primary cost components, time and monetary, we must also account for communication costs between users and nodes executing either the initial or final function of each workflow. These contributions are added respectively to both time-related and monetary terms.
Furthermore, all these costs results from invoking these workflows from user requests. We assume that the user randomly requests workflows with a rate of $\lambda_k$ for every leaf node $k$. To get the correct expected cost, we sum over user requests by weighting each monetary cost and evaluation time by the request rate $\lambda_k$ for every user.

These two cost factors are exact if every resource and function is always free to use, that is, we do not expect waiting time in the scheduling approach due to resource constraints.
As this is not realistic, we introduce an additional cost term addressing resource limitations. Because exact sequences of user requests cannot be anticipated during runtime, imposing strict constraints on workflow or node usage would lead to conservative solutions. Instead, we define a third penalty term that discourages excessive utilization of specific workflows or nodes within the serverless architecture.
Additionally, modeling the exact waiting time anticipated due to the utilization of nodes and functions is equally unrealistic.
Instead, we penalize the unequal placement of functions upon the serverless architecture, especially edge nodes that are resource constraint by a maximum RAM $\text{RAM}_{\text{max},i}$\joshua{, that is, we assume function evaluations on these edge nodes are blocked if the RAM reaches the maximum RAM constraint.} 
We penalize the expected utilization of each node $M_i$ where
\begin{align*}
	M_i=\frac{\sum_k\sum_n\sum_m \lambda_k H_{k,n} P_{n,m,i}R_{\joshua{func,}m}}{\text{RAM}_{\text{max},i}}+M_{0,i}\, .
\end{align*}
The utilization of this resource is dependent on the usage of this node over all $s$ workflows. To mitigate this, we track the utilization by adding the current utilization to $M_{i,0}$, which in turn is the summed utilization of previous placed workflows.
Furthermore, blocking of a function happens if a specific function is already used through a different request. We therefore penalize the function utilization given as
\begin{align*}
	\joshua{u}_{n,m}=\sum_{k}\sum_{i}\lambda_kH_{k,n}P_{n,m,i}t_{m,i}\, .
\end{align*}
\joshua{In this work we only assume request blocking because of exceeding the RAM limit and when the specific deployment of a function is currently utilized by another user. Therefore, reducing the utilization of a node automatically reduces the chance of blocking and in return will reduce the waiting time. We will test this hypothesis in the case study and additionally evaluate how this additional cost influence the optimization problem.}

As illustrated in Figure~\ref{fig:example_centralized}, the total cost is evaluated as the summation over all potential user requests.
The three individual cost components, monetary, temporal, and utilization-related, together form a multi-objective optimization problem. We solve this problem by employing a weighted-sum approach~\cite{marler_weighted_2010}, assigning each objective a corresponding weight.
\joshua{The weights dimensions are chosen such that the cost function is dimensionless ($w_1 [\$^{-1}],\, w_2 [s^{-1}],\, w_3 [-]$).}
Consequently, the complete optimization problem can be formulated as in~\eqref{eq:main}.

Every cost is nonlinear, due to the multiplication of the selection and placement optimization variables. Additionally, nonlinear terms are added in the constraints as we model time and monetary costs based on the interaction of adjacent functions on possible different nodes. We further introduce continuous variables such as the monetary bilinear term $S$ and the basic time variable $T$. The problem we are solving is therefore classified as a mixed-integer nonlinear  program. This problem formulation scales exponentially in computation time for an increasing number of nodes and workflow functions, which is especially difficult for a serverless architecture with many nodes. Therefore, and considering the already established decompositions that are build into the modeling of the serverless architecture, we propose a decomposition approach that scales well for a larger number of nodes in the next subsection.
\joshua{We do not directly model time-out of resources or deadlines, but the heuristic nature of the proposed utilization cost extends to these problems.
	Future work will address these as additional constraints in the optimization problem formulation.}

\subsection{Decomposition strategy}

There are two main reasons why solving~\eqref{eq:main} is computationally intractable in practice. First, it requires complete knowledge of all costs, speedups, and latencies across the entire serverless architecture on a single node. This assumption is unrealistic since nodes may belong to different providers that do not share all information, and collecting such statistics would accumulate large amounts of data on one node. Second, computation time scales exponentially with the number of nodes considered. As described in Subsection~\ref{sub:serverless_modeling}, direct communication among all nodes as a fully connected network is improbable. Instead, it is more realistic to divide the architecture into smaller regions represented on the higher level as a single node.
We extend this idea to the optimization problem itself. On each hierarchical level, placement decisions for functions within a region are performed by the respective head node that possesses local information such as latencies, costs, and speedups for its region. 
These regions are chosen such that node characteristics (speedup factors, type of node, region) remain homogeneous within each subset, and edge nodes are grouped only with other edge nodes.
This approach addresses both challenges: only local information is required, and regions remain small enough for efficient computation. An example of the structure of the decomposition is illustrated in Figure~\ref{fig:example_decomposed}. 
The goal of the decomposition approach is to solve a variant of~\eqref{eq:main} independently within each region at every level while maintaining proximity to the solution obtained when considering all nodes simultaneously.
Therefore, we now reformulate the decision variables to $H_{l,r,k,n}$ and \joshua{$P_{l,r,n,m,i}$} as we consider $r \in \{0,\ldots,n_{\text{regions},l}-1\}$ optimization problems in level $l \in \{0,\ldots,n_{l}-1\}$.
\begin{figure}[htbp]
	\centering
	\includegraphics[width=0.42\textwidth]{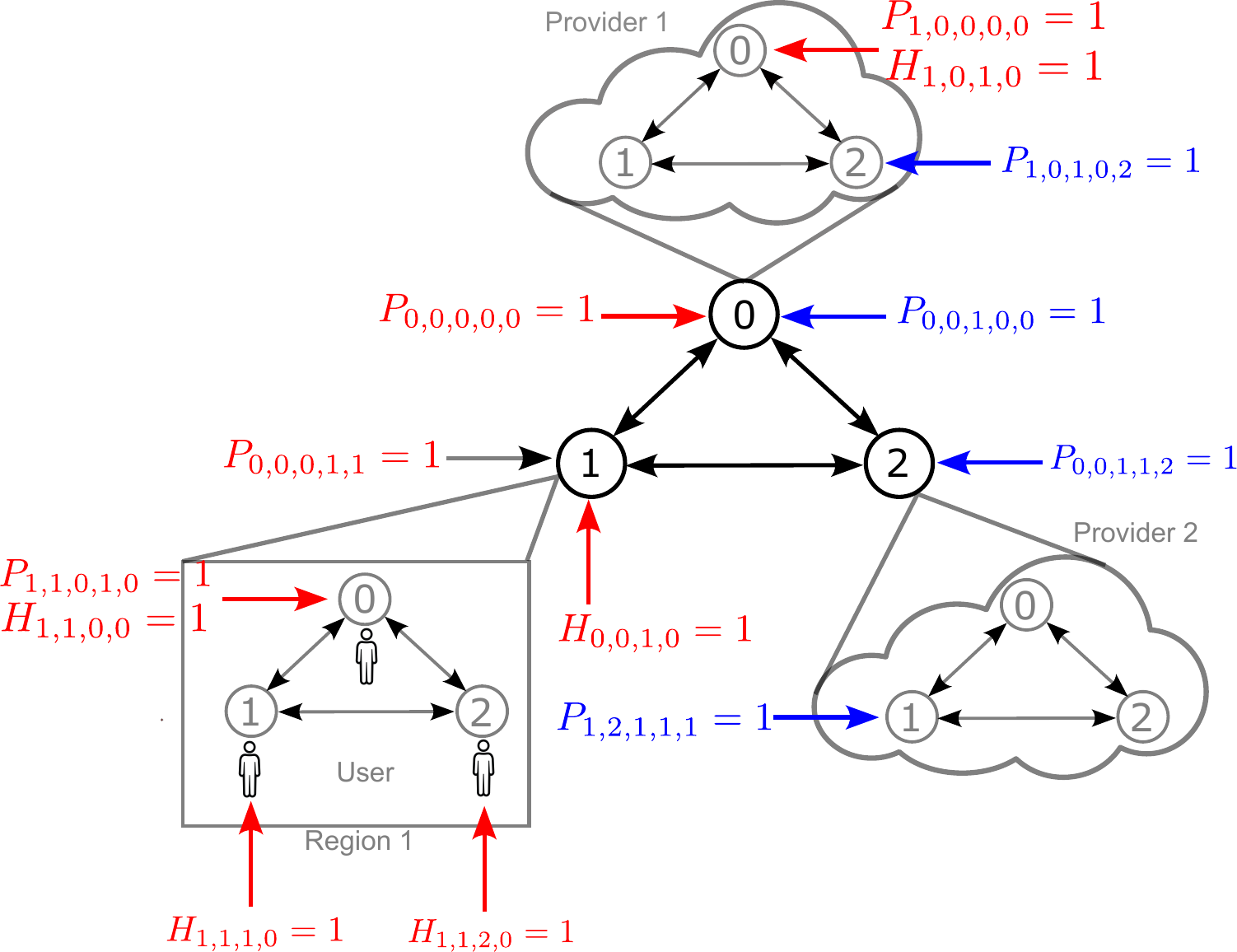}
	\caption{Decomposed approximation based on regions and providers shown for three nodes per provider and three edge-region nodes. The one workflow in this example is placed twice $n_{\text{deployments}=2}$. Each region or provider is represented by a single node in the main problem, where placement variables $P$ and selection variables $H$ are determined. Functions placed on representing nodes are then resolved within subproblems for their respective subsets. In contrast to the centralized approach, selection can be performed only once per region. Thus, only one workflow can be selected for all three edge nodes collectively in this example.}
	\label{fig:example_decomposed}
\end{figure}
Figure~\ref{fig:example_decomposed} illustrates the decomposition strategy. We start at level $n_l-1$, where physical nodes exist. For the next level, groups of edge-node regions and cloud providers are aggregated into single representative nodes. This results in several regions, each having a head node responsible for solving its local subproblem.
Analogous to the main problem formulation, user requests must be included for every subproblem. Any node consisting solely of edge devices or representing physical edge hardware remains treated as a leaf node from which requests originate.
Therefore, the user request rate is calculated by summing over all rates in the lower level
\begin{align}
	\label{eq:add_cons_1}
	\lambda_{l-1,\text{succ\_node}(\joshua{l,}k)}=\sum_{k}\lambda_{l,k}\, .
\end{align}

Each region of nodes at level~$l$ corresponds to a single representative node at level~$(l-1)$. Therefore, placements within regions can be tracked by determining whether function~$m$ of workflow~$n$ was placed on its predecessor node at level~$(l-1)$. Naturally, we start solving the placement problem at level~0 and proceed through successive levels by providing the upper level placements to the lower level head nodes. Although this introduces an exponential increase in subproblem count across levels, each subproblem operates independently on its own head node and can thus be computed in parallel.
Two additional constraints must be introduced to correctly perform decomposition:
First, if certain functions or entire workflows have already been placed elsewhere, that is, if these functions were not assigned to their predecessor node at level $(l-1)$, they cannot be reconsidered within the current subproblems. Because communication routes always pass through head nodes, placing such functions virtually on their head node has an equivalent effect as placing them outside their subset entirely (see Figure~\ref{fig:example_decomposed}). \joshua{These virtual placements do not represent functions that are physically executed on the local head node, but only serve as placeholders for workflow parts that are outside the current region and interact with it through the head-node interface.} Therefore, we add a function $m$ into the set $\mathcal{M}\joshua{_n}$ if $P_{l-1,\text{succ\_region}(\joshua{l,}r),n,m,\text{succ\_node}(\joshua{l,}r)}$=0 and add the following constraints
\begin{align}
	\label{eq:add_cons_3}
	P_{l,r,n,m,0}=1\, \, \forall m\in \mathcal{M}\joshua{_n}\, .
\end{align}
As these are virtual placements, they cannot be included in the calculation of the selection cost, especially the function placement penalization $M_{l,r,0}$ on the head node.
To account for this number of virtual placements in the selection cost, we have to adapt the function term $M_{l,r,0}$ for all subproblems
\begin{equation}
	\label{eq:add_cons_4}
	\begin{aligned}
		M_{l,r,0}&=\frac{\sum_k \sum_n \sum_{m^\star} \lambda_{l,r,k} H_{l,r,k,n}P_{l,r,n,m^\star,0}R_{\joshua{func,}m^\star}}{\text{RAM}_{\text{max},0}}\\
			&+M_{l,r,0,0}\, ,
	\end{aligned}
\end{equation}

where $m^\star\not\in \mathcal{M}\joshua{_n}$.
If complete workflows have been deployed elsewhere, we reduce $n_{\text{deployments}}$ by one accordingly. This pruning strategy reduces computational effort since regions without placed functions do not need to perform further calculations.
\joshua{Under the assumed hierarchical communication model, this pruning step does not introduce additional local cost by itself.}
The situation differs for selection decision variables $H_{l,r,k,n}$. Because upper-level representations also aggregate multiple edge regions into regional entities, selection decisions become coupled among users belonging to those regions (see Figure~\ref{fig:example_decomposed}). To obtain correct cost estimation during selection evaluation, we must assume identical workflow choices across all user requests originating from one region. Hence, recomputation of $H_{l,r,k,n}$ beyond level $l>0$ is omitted. 
Again, we have to distinguish between leaf nodes that stem from real users, where if in the upper level $\text{succ\_node}(\joshua{l,}r)$ is a leaf node and $H_{l-1,\text{succ\_region}(\joshua{l,}r),\text{succ\_node}(\joshua{l,}r),n}=1$ we set
\begin{align}
	\label{eq:add_cons_5}
	H_{l,r,k,n}=1\, ,
\end{align}
for all leaf nodes $k$ participating in the next-level subproblems.
For the head node zero, we have to additionally account for workflows that are possibly invoked through nodes from other regions. As these information exchanges run through this main node, we treat the main node as the leaf for these additional nodes. Therefore, we sum the selections $H_{l-1,\text{succ\_region}(\joshua{l,}r),k,n}=1$ for all leaves that are not the successor of this optimization problem that is
\begin{align}
	\label{eq:add_cons_6}
	H_{l,r,0,n}&=\Gamma\left(\sum_k H_{l-1,\text{succ\_region}(\joshua{l,}r),k,n}\right)\, ,
\end{align}
where $\Gamma(\cdot)$ is the step function.
\joshua{This binary aggregation only indicates whether workflow deployment $n$ is relevant for the current subproblem through external requests. The corresponding request volume is not represented by $\Gamma(\cdot)$, but by the aggregated request rates introduced below.}
This coupling reduces degrees of freedom compared with centralized optimization and typically increases overall cost, particularly regarding selection-related terms, since regional uniformity restricts flexibility among users sharing similar locations or providers.
Therefore, successful decomposition requires balancing computational feasibility against model accuracy, specifically, there exists a trade-off between reduced computational demand achieved via coarse regional aggregation (solving fewer large-scale problems) and finer granularity necessary to represent realistic distributions among edge-node clusters.
During the solving process, the user requests have to be adapted as well, depending on the decisions made in the upper level. As the head node serves as a placeholder for requests coming from outside the region, we have to additionally add the user request rate depending on the requests from the other regions, that is 
\joshua{Here, $\text{succ\_node}(\joshua{l,}0)$ denotes the upper-level representative of the current regional subproblem, i.e., the node at level $(l-1)$ that corresponds to the local head node indexed by zero.}
\begin{align}
	\label{eq:add_cons_2}
	\lambda_{l,0}=\lambda_{l,0}+\sum_{k}\lambda_{l-1,k}-\lambda_{l-1,\text{succ\_node}(\joshua{l,}0)}\, ,
\end{align}
where we sum over all other regions in the upper level to gather the requests.
With the addition of~\eqref{eq:add_cons_1}-\eqref{eq:add_cons_2}, the main problem \eqref{eq:main} can now be solved for each region $r$ in each level $l$, requiring only communication down the levels during the placement and an update of the current cloud attributes after the placement on the lowest physical node level.
\joshua{\subsection{Scalability and Suboptimality}
	The theoretical properties of the proposed decomposition can be analyzed mathematically. We first provide a scalability analysis and then discuss the suboptimality introduced by the decomposition.
	For scalability, we analyze how the overall computation time $T_{\text{comp}}$ required to solve the decomposed problem grows with the number of nodes $N_s$.
	To simplify the analysis, we make the following assumptions:
	
	\begin{assumption}
		\label{ass:regions}
		The overall number of $N_s$ nodes is divided into regions of size $n_s\in [n_{s,\text{min}},\, n_{s,\text{max}}]$. Likewise, each region in level $l$ is represented by one node in level $l-1$, and a number of $n_s$ regions are grouped into one region in level $l-1$.
	\end{assumption}
	\begin{assumption}
		\label{ass:fixed}
		For a given decomposition, we assume that the computational demand of each subproblem depends only on the number of nodes in the corresponding region. Hence, the computation time of the subproblems is bounded by $T(n_{s,\text{min}})\leq T(n_s)\leq T(n_{s,\text{max}})$.
	\end{assumption}
	\begin{assumption}
		\label{ass:parallel}
		All subproblems within the same level can be solved in parallel, since the algorithm is decentralized and each subproblem is solved by the corresponding regional head node. However, the subproblems in level $l$ require the solution of the corresponding subproblem in level $l-1$. Therefore, the levels have to be processed sequentially.
	\end{assumption}
	These assumptions are valid in the case of large cloud sizes with multiple providers as examined in this work.
	Under these assumptions, the following theorem holds:
	\begin{theorem}
		The computation time of the decomposed approach grows logarithmically with the number of nodes, that is,
		$T_{\text{comp}}(N_s)\sim\mathcal{O}(\lceil \log_{n_{s,\text{min}}}(N_s)\rceil)$.
	\end{theorem}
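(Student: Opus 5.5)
The plan is to reduce the total computation time to (number of levels) $\times$ (per-level wall-clock time) and then bound each factor using Assumptions~\ref{ass:regions}--\ref{ass:parallel}.

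\emph{Per-level time.} By Assumption~\ref{ass:parallel}, all subproblems within one level are solved simultaneously by their respective head nodes. The wall-clock time of a level is therefore the maximum of its subproblem times. By Assumption~\ref{ass:fixed}, this maximum is at most $T(n_{s,\text{max}})$, a constant independent of $N_s$, since the region sizes lie in the fixed interval $[n_{s,\text{min}},n_{s,\text{max}}]$. The levels are processed sequentially, so
\begin{align*}
T_{\text{comp}}(N_s)\leq n_l\, T(n_{s,\text{max}}),
\end{align*}
and symmetrically $T_{\text{comp}}(N_s)\geq n_l\,T(n_{s,\text{min}})$.

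\emph{Counting levels.} By Assumption~\ref{ass:regions}, passing from level $l$ to level $l-1$ groups at least $n_{s,\text{min}}$ regions into one. Hence the number of representative nodes shrinks by a factor of at least $n_{s,\text{min}}$ per level. Starting from $N_s$ physical nodes at level $n_l-1$, level $n_l-1-q$ contains at most $N_s/n_{s,\text{min}}^{\,q}$ nodes. The hierarchy terminates at level $0$, which consists of a single region of size at most $n_{s,\text{max}}$. It follows that $n_{s,\text{min}}^{\,n_l-1}\leq N_s$ up to that final constant factor, so $n_l\leq \lceil\log_{n_{s,\text{min}}}(N_s)\rceil$. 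Combining the two steps gives
\begin{align*}
T_{\text{comp}}(N_s)\leq T(n_{s,\text{max}})\,\lceil \log_{n_{s,\text{min}}}(N_s)\rceil,
\end{align*}
which is the claimed $\mathcal{O}$-bound. The analogous argument with $n_{s,\text{max}}$ yields a matching lower bound of order $\log_{n_{s,\text{max}}} N_s$, so the growth is in fact logarithmic and not merely bounded by a logarithm.

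\emph{Main obstacle.} The delicate step is the level count. Regions need not have equal sizes, and the top region may be smaller than $n_{s,\text{min}}$, so the off-by-one term and the ceiling must be handled carefully. I would first prove by induction on $q$ that the number of nodes on level $n_l-1-q$ is at most $\lceil N_s/n_{s,\text{min}}^{\,q}\rceil$. The hierarchy necessarily stops once this quantity reaches $1$, which gives the bound on $n_l$ directly. Two further points need to be justified. First, the communication of placements between levels and the pruning step add only constant overhead per level. Second, $T(\cdot)$ does not depend on $N_s$ through $n_{\text{deployments}}$ or through the virtual placements; both of these are fixed by Assumption~\ref{ass:fixed}.
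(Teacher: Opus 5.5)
Your proposal is correct and follows the paper's proof essentially step for step: per-level wall-clock time is at most $T(n_{s,\text{max}})$ by parallelism and Assumption~\ref{ass:fixed}, there are at most $\lceil\log_{n_{s,\text{min}}}(N_s)\rceil$ sequentially processed levels by Assumption~\ref{ass:regions}, and the two bounds are multiplied; your explicit level count and $\Omega(\log N_s)$ lower bound go beyond what the paper writes out. One small slip is that $n_{s,\text{min}}^{\,n_l-1}\leq N_s$ alone only gives $n_l\leq\log_{n_{s,\text{min}}}(N_s)+1$, so to get the ceiling exactly you should use that the level-$0$ region itself has at least $n_{s,\text{min}}$ nodes (Assumption~\ref{ass:regions}), i.e.\ $n_{s,\text{min}}^{\,n_l}\leq N_s$, rather than its upper size bound $n_{s,\text{max}}$---though the $\mathcal{O}$-claim holds either way.
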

	
	\begin{proof}
		From Assumption~\ref{ass:regions}, it follows that $L$ levels are required to represent $N_s$ nodes with subregions of size $n_s$ or smaller. This number can be overapproximated by $L\leq\lceil \log_{n_{s,\text{min}}}(N_s)\rceil$. Under Assumption~\ref{ass:parallel}, the time $T_l$ required to solve all subproblems in level $l$ is given by the maximum computation time of any subproblem in that level. Therefore, using Assumption~\ref{ass:fixed}, it follows that $T_l\leq T(n_{s,\text{max}})$.
		Since the levels have to be processed sequentially according to Assumption~\ref{ass:parallel}, the overall computation time $T_{\text{comp}}(N_s)$ is the sum of all level-wise computation times $T_l$ for $l\in[0,\ldots,L-1]$.
		Thus, we obtain
		\begin{align*}
			T_{\text{comp}}(N_s)&=\sum_{l=0}^{L-1}T_l
			\leq\sum_{l=0}^{L-1}T(n_{s,\text{max}})
			\\&=L\cdot T(n_{s,\text{max}})
			\leq \lceil \log_{n_{s,\text{min}}}(N_s)\rceil T(n_{s,\text{max}}).
		\end{align*}
		This proves that $T_{\text{comp}}(N_s)\sim\mathcal{O}(\lceil \log_{n_{s,\text{min}}}(N_s)\rceil)$.
	\end{proof}
	
	In addition to scalability, the suboptimality introduced by the decomposition and by the fixed selection policy has to be considered. The decomposed formulation restricts the feasible set of the centralized problem, since placement and selection decisions are made region-wise and information between regions is exchanged only through the hierarchical head-node structure. In particular, the selection variables $H_{k,n}$ are fixed during placement and cannot adapt to the realized stochastic request sequence during runtime. Moreover, in the decomposed formulation, users aggregated within the same region may be forced to share selection decisions that could be chosen independently in the centralized formulation. Therefore, the decomposed solution is generally suboptimal with respect to the centralized formulation.
	
	A tight analytical bound on this suboptimality is difficult to obtain, because the realized cost depends on the stochastic request sequence, the resulting resource contention, and the interaction between placement and selection decisions. Consequently, we do not claim an a priori optimality certificate for the decomposed solution. Instead, the practical impact of this suboptimality is quantified empirically in the case study by comparing the decomposed formulation with the centralized formulation on a small example and with a heuristic baseline for larger architectures.
}

\section{Case study}

\subsection{\joshua{ Evaluation Methodology}}
In the case study, we analyze the benefits of the proposed algorithm using simulated serverless architectures and synthetic workflows. For modeling purposes, we define parameters describing both workflows and the architecture, such as evaluation costs on cloud nodes, the number of nodes within the serverless architecture, and the number of functions contained in each workflow.
\joshua{Modeling parameters are chosen closely to values of real applications, such as in \cite{GHORBIAN2024103291}.}
\begin{table}[htbp]
	\centering
	\caption{Fixed model parameters for the serverless architecture and workflows.}
	\label{tab:param_fixed}
	\begin{tabular}{lclc}
		\hline
		Architecture &Value& Workflow &Value\\
		\hline
		$p_{\text{RAM}}$ provider $1$&$\SI{0.0167}{\$ MB^{-1}}$&$n_{\text{deployments}}$&[2\, 5]\\
		$p_{\text{send}}$ provider $1$&$\SI{0.01}{\$ MB^{-1}}$&$\text{RAM}$&$[1\, 100]\SI{}{MB s^{-1}}$\\
		$p_{\text{data}}$ provider $1$&$\SI{0.01}{\$ MB^{-1}}$&$t_{\text{func}}$&$[100,\, 1000]\SI{}{s}$\\
		$p_{\text{RAM}}$ provider $2$&$\SI{0.02}{\$ MB^{-1}}$&$x_{\text{func}}$&$[\SI{0.75}{},\, \SI{1}{}]$\\
		$p_{\text{send}}$ provider $2$&$5\cdot 10^{-3}\SI{}{\$ MB^{-1}}$&$d_{\text{\joshua{in}}}$&$[0,\, 100]\SI{}{MB}$\\
		$p_{\text{data}}$ provider $2$&$1\cdot 10^{-3}\SI{}{\$ MB^{-1}}$&$d_{\text{send}}$&$[100,\, 1000]\SI{}{MB}$\\
		$L_{\text{nodes}}$&$[0,\,10]\SI{}{s}$&$d_{\text{data}}$&$[0,\, 1000]\SI{}{MB}$\\
		$b_{\text{decrease}}$&$\SI{10}{}$&$\rho_{\text{data}}$&$\SI{0.2}{}$\\
		$n_{l}$&2&$n_{\text{func}}$&$[3,\, 6]$\\
		$\lambda_k$&$[0.03\, 0.05]$&&\\
		\hline
	\end{tabular}
\end{table}
These parameters are either uniformly distributed within fixed bounds or remain constant throughout the case study. Table~\ref{tab:param_fixed} defines these parameters together with the values used across all experiments.
The serverless architecture is modeled with two cloud providers and one type of edge node. Each provider has fixed costs for sending data, executing functions, and transferring dependent data, chosen to approximate realistic values for commercial cloud services~\cite{xie_workflow_2023}. Additional architectural parameters include bounds for latencies and speedups on nodes. To achieve a more realistic simulation, latency bounds are reduced in each level to reflect increasingly regional connections according to the following formula
\begin{align*}
	L_{l,\text{upper bound}}=L_{\text{nodes}} b_{\text{decrease}}^{-l}\, .
\end{align*}
Function attributes are randomly sampled to generate diverse workflows. These attributes include evaluation time requirements, data sizes exchanged between functions, dependency data sizes, and the probability that a function depends on external data. In addition to these workflow parameters, we define several architectural scaling parameters that influence problem size.
The number of nodes per level is uniformly distributed, combined with the total number of levels, these two parameters determine the overall scale of the serverless architecture. Furthermore, we separately specify the number of nodes considered in the highest level. For workflows, scalability is controlled by varying both the total number of workflows deployed on the architecture and the number of functions contained within each workflow.
As the actual costs for a placed workflow can only be evaluated in the online selection phase, the case study simulates requests in a simulation time of $t_{\text{sim}}=\SI{200}{s}$. The requests are randomly sampled with a Poisson distribution using the request rate $\lambda_k$ for each physical user $k$. These requests are scheduled using a FIFO scheduling, where for each workflow request, the future resources are blocked immediately. In this simulation, we therefore model two blocking mechanisms: A specific function $m$ in a workflow $t$ with the deployment number $n$ cannot be used twice in the same time step, and if the RAM resource of node $i$ is fully occupied, no further function can be executed on said node.

Instead of the heuristic utilization cost $U$ that is used in the placement problem, we focus, therefore, on the actual waiting time $W$ due to blocking.
We can therefore formulate the overall cost $C$, defined as a weighted sum of the two competing objectives,
\begin{align*}
	C = w_1 M + w_2 (B + W)\, .
\end{align*}
In this study, we additionally will have to show that the weight $w_3$ has an influence on reducing the waiting time.
\joshua{The stochastic simulation emulates realistic workflow request scenarios and is therefore essential for evaluating how effectively the proposed placement formulation, including its proxy utilization cost, performs under stochastic request arrivals. Instead of relying on real workflow traces, this work considers synthetic workflows whose attributes, such as function execution times, data demands, and resource allocations, are initially assumed to match the estimates available during deployment. Similar simulation-based evaluations are widely used in the literature \cite{costless,tarot,neptune,auctionwhisk,luo2024efficient,MOAKHAR2024103890}, where the required parameters are typically obtained through an initial profiling or monitoring phase. Extending the evaluation to real traces requires accounting for deviations between estimated and realized parameters and for the resulting suboptimality of the placement. This aspect is investigated in the final subsection of this case study.}
For comparison reasons, we frequently evaluate the gap between the cost of two different solution strategies
\begin{align*}
	g=\frac{C_1-C_2}{C_1}\, ,
\end{align*}
with $C_1>C_2$.
Second, we analyze the computation time required to solve the placement problem. In the centralized formulation, this corresponds to solving a single global optimization problem. In contrast, under decomposition we assume that total computation time equals the sum over maximum subproblem times at each level,
\begin{align*}
	t_c = \sum_{l=0}^{N_{\text{level}}-1}\underset{r}{\operatorname{max}}\, t_{c,r,l}\, ,
\end{align*}
where $t_{c,r,l}$ denotes computation time for subproblem~$r$ at level~$l$. Since subproblems within a given level can be solved in parallel, only their maximum runtime contributes per level. The communication time required to pass information between levels is neglected here, as the centralized approach would also need constant communication between the nodes to update the central parameters and because we want to focus on the computational burden of the placement optimization problem in itself. 
All simulations are performed on an AMD Ryzen 9 16‑Core processor equipped with $\SI{94}{GB}$ RAM. The modeling framework and cost calculations are implemented in Python. Integer decision problems are solved using Gurobi~\cite{gurobi}. The code necessary to generate the presented results is openly available\footnote{\scriptsize{\url{https://github.com/JoshuaAda/IEA_LSWPISCUINP}}}. 
\subsection{Small working example}
\label{sub:small_example}
First, we reintroduce the small example used in the modeling section, now with real parameters as described in Table~\ref{tab:param_fixed} to illustrate the modeling and solution structure of the proposed placement algorithm and highlight its differences compared to the centralized approach. For ease of visualization, we reduce the number of deployments to two, the number of functions for each workflow to three, the number of nodes in the second level to three, and the number of edge node regions in the first level to one. 
Figure~\ref{fig:results_centralized} shows the full problem, considering all nodes in one optimization problem.
In comparison, Figure~\ref{fig:results_decomposed} shows the serverless architecture considered in the decomposed setting, consisting of two hierarchical levels. 

\begin{figure}[htbp]
	\centering
	\includegraphics[width=0.42\textwidth]{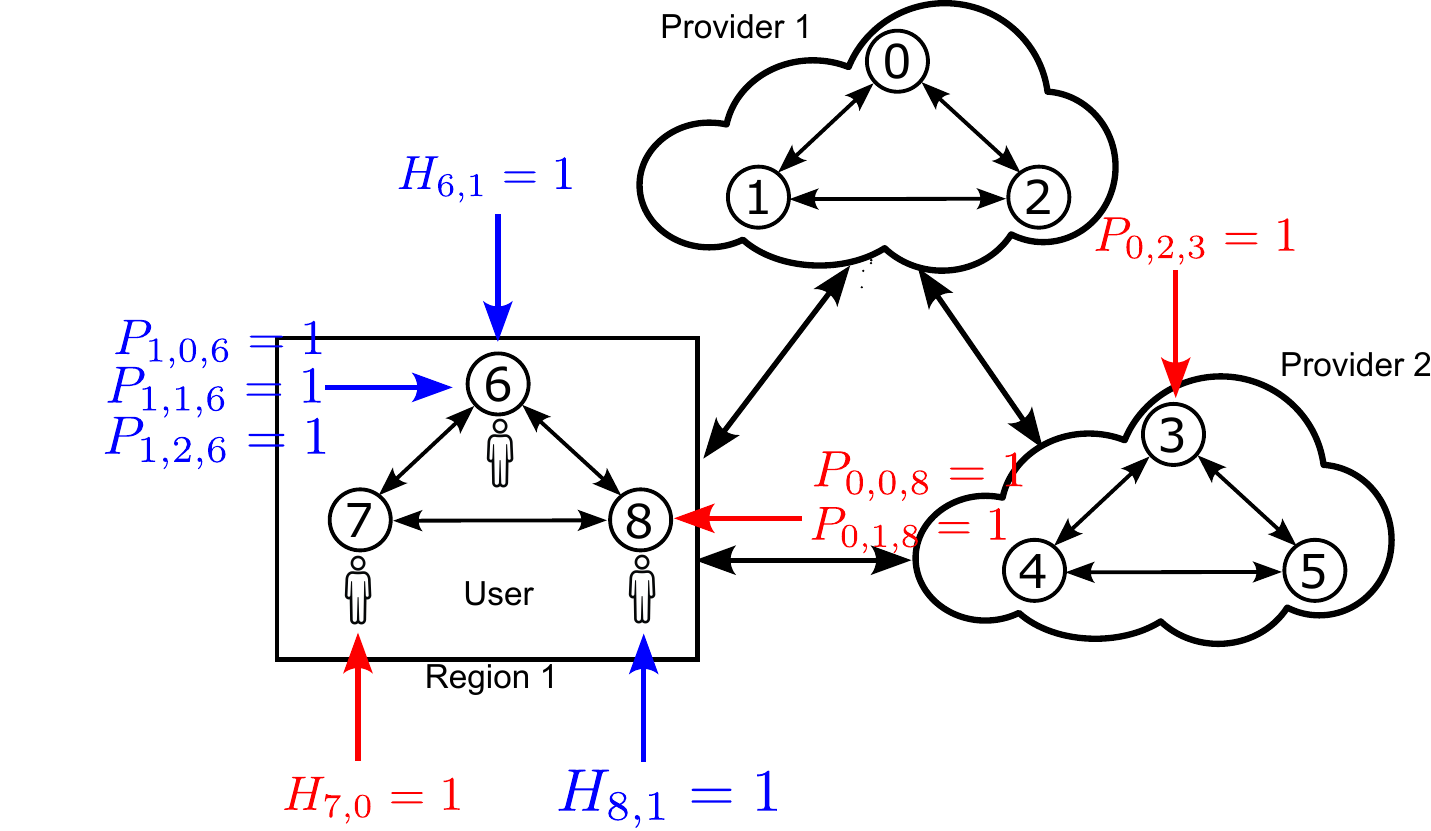}
	\caption{\joshua{Placement solution of the centralized approach for two workflows (first workflow  placements and selections in red, second in blue) containing three functions. In the centralized approach, the three different users can select different workflows (user 6 and 8 selecting the \joshua{second}, user 7 the \joshua{first}), giving more freedom to reduce blocking as well as basic time and monetary costs.}}
	\label{fig:results_centralized}
\end{figure}
\begin{figure}[htbp]
	\centering
	\includegraphics[width=0.42\textwidth]{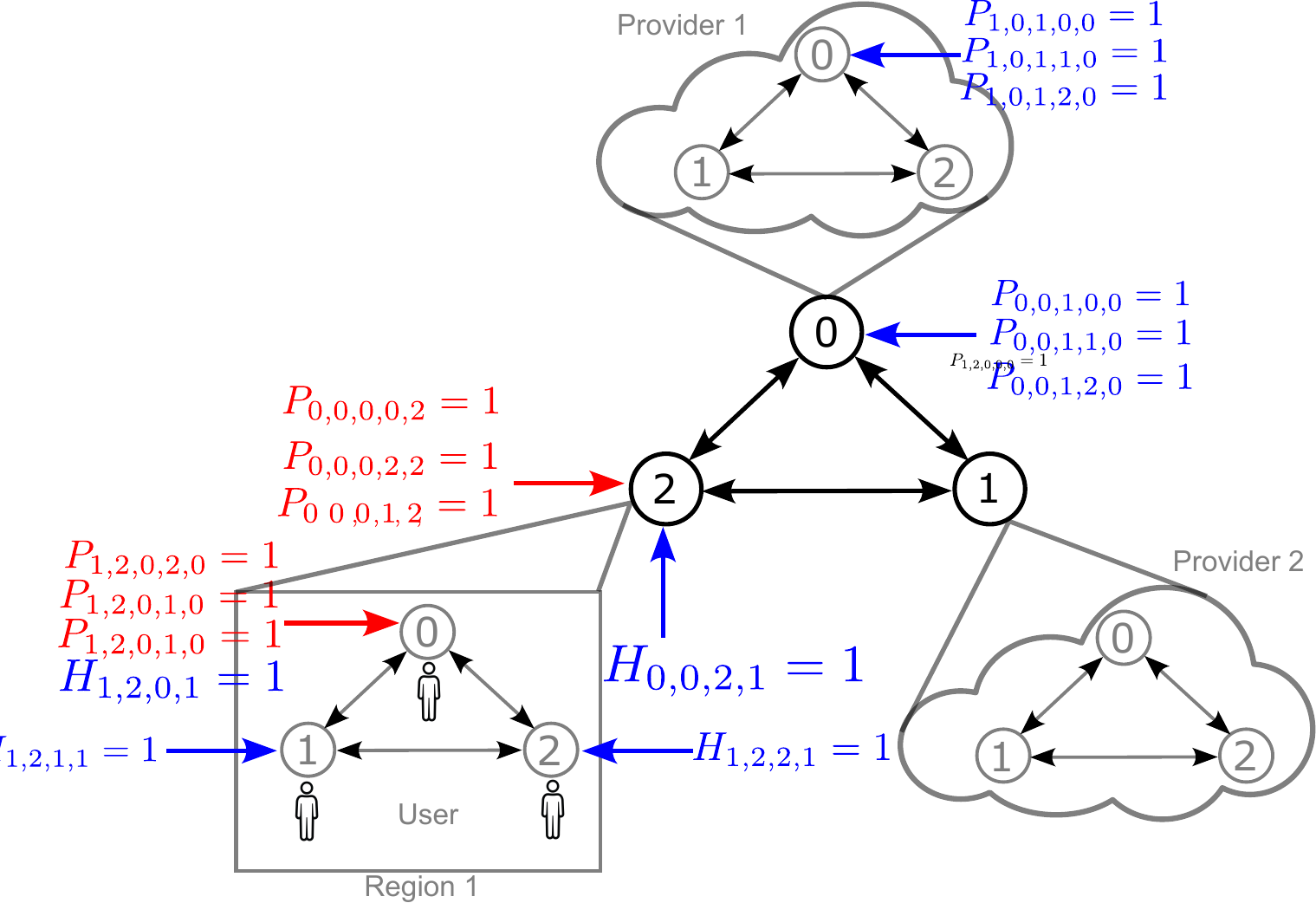}
	\caption{\joshua{Placement solution of the decomposed approach for two workflows (first workflow  placements and selections in red, second in blue) containing three functions. In the decomposed approach, only one workflow can be selected as the three users are lumped in one region on the first level of optimization. The functions are placed on the regions in the first level and are later distributed onto the nodes of the regions in the second level.}}
	\label{fig:results_decomposed}
	
\end{figure}

In the centralized case, three users can send requests independently, whereas in the decomposed approach, only one workflow can be selected per region, resulting in slightly higher overall cost.
In total, we solve one optimization problem containing $131$ decision variables for the centralized formulation and four subproblems for the decomposed approach. The main problem in the decomposition approach includes $49$ decision variables, while the sizes of subordinate problems depend on decisions made at this top level ($26,\,$\joshua{$0$}, and $26$ decision variables for the specific subproblems).
Figure~\ref{fig:results_centralized} presents results for the centralized solution, while Figure~\ref{fig:results_decomposed} shows outcomes for the decomposed strategy assuming equal weights for the monetary cost and time cost $w_1 = w_2 = 0.5$, and a utilization cost of $w_3=\joshua{1}$. The primary difference between the centralized and the decomposed approach lies in the selection behavior. 
This difference can be seen in the small example. In Figure~\ref{fig:results_decomposed}, all three users can select the \joshua{second} workflow, whereas the \joshua{first} workflow has no influence on the cost. On the contrary, the centralized approach can select different workflows for different users, such that the second workflow is selected for the second user. The result follows directly from the costs presented in~\eqref{eq:main}. In general, it is beneficial to place the functions on the regional edge nodes as they offer small latencies and free evaluation. As the first function in the workflow can be completed much faster on the cloud node of the first provider, it is placed there. As the latencies between the regional nodes are very small, the selection cost leads to a more equal distribution of the function onto the edge nodes.
The total cost $C$ across all three users in the simulation amounts to \joshua{$305.27$} for the centralized approach and \joshua{$427.07$} for the decomposed variant. The main difference stems from the waiting time, which is $\SI{137}{s}$ in the decomposed case against $\SI{79}{s}$ in the centralized approach, which is due to the higher degree of freedom to choose the workflows in the centralized approach. Despite the higher cost due to restricted selection flexibility, computation time decreases even within this small-scale scenario substantially from \joshua{$\SI{217.93}{ms}$} for centralized optimization to \joshua{$\SI{47.57}{ms}$} using the decomposition.

\subsection{Scaling of the decomposition approach}
\label{sub:scaling}
After demonstrating the main functionalities of the proposed algorithm, we now highlight its principal advantage: the significantly improved scalability of the algorithm compared to the centralized approach. This aspect is crucial for realistic workflow placement scenarios, as the number of nodes expected in practical serverless architectures is typically significantly larger than those considered in the previous subsection.
To scale the architecture, two parameters can be adjusted. First, the number of nodes within each subregion can be increased either as a fixed value or as bounds for a uniform distribution from which these values are sampled, which leads to a moderate increase in total node count. Second, we can increase the number of hierarchical levels used to organize nodes within the architecture. This results in an exponential growth of total nodes and provides a useful setting for evaluating scalability under large system sizes.
Due to this exponential growth in decision variables with increasing numbers of nodes, we expect that computation time for the centralized approach will scale poorly. Therefore, we compare both approaches using an architecture with two levels while gradually increasing node counts within subregions. Other workflow and node parameters are sampled uniformly within bounds provided in Table~\ref{tab:param_fixed}. Table~\ref{tab:scaling_comp} shows overall costs and computation times for both approaches. Costs are slightly higher for the decomposed method across all configurations, the maximum gap observed between both approaches is \joshua{$\SI{5.23}{\%}$}. 
\begin{table}[htbp]
	\centering
	\caption{Overall cost $C$ and computation time $t_c$ for a two-level decomposition with an increasing number of nodes $n_{\text{nodes},1,r}$ in the second level for every region in comparison to the centralized approach that solves the full optimization problem.}
	\label{tab:scaling_comp}
	\begin{tabular}{lccc}
		\hline
		Value &$n_{\text{nodes},1,\joshua{r}}=2$&$n_{\text{nodes},1,\joshua{r}}=3$&$n_{\text{nodes},1,\joshua{r}}=4$\\
		\hline
		$C$  [-] (decomposed)&$\joshua{16510.59}$&\joshua{$13799.34$}&\joshua{$15379.66$}\\
		$C$ [-] (centralized)&$\joshua{\mathbf{15690.41}}$&$\joshua{\mathbf{13545.68}}$&$\joshua{\mathbf{15024.66}}$\\
		$g[\%]$&\joshua{$5.23$}&\joshua{$1.87$}&\joshua{$2.36$}\\
		$t_c$ (decomposed) [s]&$\joshua{\mathbf{1.65}}$&$\joshua{\mathbf{1.90}}$&$\joshua{\mathbf{2.03}}$\\
		$t_c$ (centralized) [s]&\joshua{$12.74$}&\joshua{$67.35$}&\joshua{$161.13$}\\
		\hline
	\end{tabular}
\end{table}

For the centralized optimization, the runtime increases exponentially with the number of nodes. For instance, placing two workflows on a serverless architecture containing a total of $16$ nodes already requires \joshua{$\SI{161.13}{s}$}. 

In contrast, computation time for the decomposed approach increases logarithmically with node count (linear with respect to the number of levels) due to parallel evaluation of subproblems at each level.

These findings confirm that decomposition enables efficient scaling toward large architectures while providing accurate approximations to global optimal solutions.
There is no clear increase or decrease of the gap between the centralized and decomposed solution with the rising number of nodes, confirming that the approximation error is not dependent from the number of nodes. The gap instead depends highly on the chosen scenario. For example, a rising number of deployments can allow for a much bigger gap due to the additional degree of freedom to choose workflows for the centralized approach as discussed in Subsection~\ref{sub:small_example}. For the computation time, we achieve a favorable scaling using the decomposed approach in contrast to the exponential growth in the centralized approach.

To further validate scalability benefits, we next increase architectural depth by varying hierarchical levels, thereby enabling substantially larger physical number of nodes. In this scenario, direct comparison against centralized optimization becomes infeasible since it fails to produce solutions. Instead, we benchmark against a simple heuristic strategy that places all functions on one cloud provider (provider 1). \joshua{This cloud-only placement serves as a standard baseline in related work~\cite{xie_workflow_2023}. We further note that other heuristic-based approaches, such as \cite{DEHURY2024111179,smith}, address different variants of the serverless placement problem with different assumptions and objectives. Therefore, they are not directly comparable to the decentralized workflow placement setting considered in this work.}
In our case, it serves as a good heuristic, as blocking because of RAM constraints is reduced as much as possible with this heuristic.
In this experiment, we scale from two up to four levels while assuming two regional edge nodes (and again two cloud nodes) initially. Each subproblem contains ten physical nodes per region or provider, consequently total node counts reach $40$, $400$, and $4000$ physical devices at successive levels across three test cases.
\begin{table}[htbp]
	\centering
	\caption{Overall cost $C$ and computation time $t_c$ for increased number of levels $n_{\text{l}}$ and therefore the total number of nodes in the last level $n_{\text{nodes}}=\sum_r n_{\text{nodes},n_l-1,r}$ in the decomposition in comparison to a cloud only heuristic.}
	\label{tab:scaling_heuristic}
	\begin{tabular}{lccc}
		\hline
		Value &\makecell{$n_{\text{l}}=2$\\$n_{\text{nodes}}=40$}&\makecell{$n_{\text{l}}=3$\\$n_{\text{nodes}}=400$}&\makecell{$n_{\text{l}}=4$\\$n_{\text{nodes}}=4000$}\\
		\hline
		$C$  [-] (decomposed)&\joshua{$\mathbf{13873.43}$}&\joshua{$\mathbf{11294.04}$}&\joshua{$\mathbf{13591.81}$}\\
		$C$ [-] (heuristic)&\joshua{${16942.93}$}&\joshua{${12141.04}$}&\joshua{${13863.40}$}\\
		$g[\%]$&\joshua{$22.12$}&\joshua{$7.50$}&\joshua{$1.99$}\\
		$t_c$ (decomposed) [s]&\joshua{${4.85}$}&\joshua{${10.59}$}&\joshua{${37.37}$}\\
		$t_c$ (heuristic) [s]&$\mathbf{0.00}$&$\mathbf{0.00}$&$\mathbf{0.00}$\\
		\hline
	\end{tabular}
	
\end{table}
Table~\ref{tab:scaling_heuristic} shows the cost and computation times for the decomposed approach and a cloud-only heuristic for an increased number of levels. It shows that the proposed decomposed approach consistently outperforms this heuristic regarding overall cost. Computation times reveal negligible runtime requirements for heuristics but linear growth for decomposition. This is an expected outcome since additional levels introduce more parallelizable subproblems rather than sequential dependencies.

Overall findings demonstrate that employing optimal placement through integer programming becomes feasible even at large scales when combined with practical decomposition strategies and has significant advantages over simple heuristics rules. The proposed method thus compromised between theoretical optimality and computational tractability required for realistic serverless environments.

\subsection{Sensitivity to weight parameters}
\begin{figure}[b]
	
	\centering
	\begin{subfigure}[t]{.21\textwidth}
		\centering
		\includegraphics[width=\textwidth]{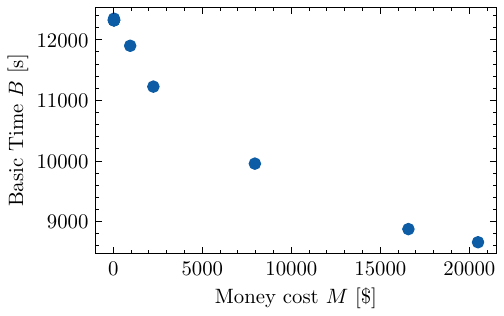}
		\caption{\joshua{$w_3=0.$}}
		\label{fig:pareto_0}
	\end{subfigure}
	\begin{subfigure}[t]{.21\textwidth}
		\centering
		\includegraphics[width=\textwidth]{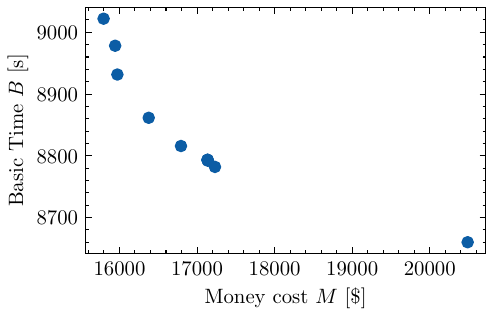}
		\caption{\joshua{$w_3=3.$}}
		\label{fig:pareto_3}
	\end{subfigure}\\
	\begin{subfigure}[t]{.21\textwidth}
		\centering
		\includegraphics[width=\textwidth]{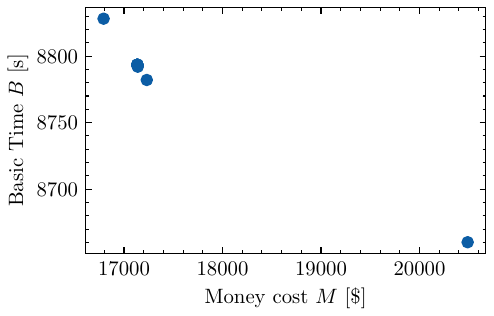}
		\caption{\joshua{$w_3=6.$}}
		\label{fig:pareto_6}
	\end{subfigure}
	\begin{subfigure}[t]{.21\textwidth}
		\centering
		\includegraphics[width=\textwidth]{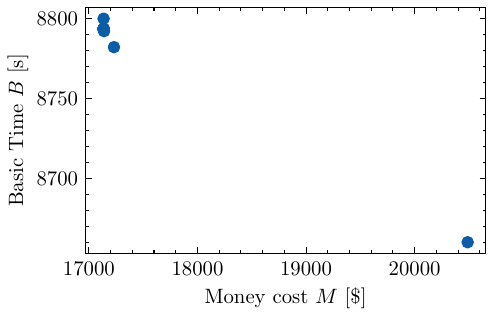}
		\caption{\joshua{$w_3=9.$}}
		\label{fig:pareto_9}
	\end{subfigure}
	\caption{Pareto fronts for monetary and basic time costs using a weighted sum approach (increase weight $w_1$ for basic time cost, while decrease weight $w_2=1-w_1$ for the monetary cost), exemplary for different values of penalization of utilization $w_3$.}
	\label{fig:paretos}
\end{figure}
Until this subsection, we have evaluated the solution of the placement problem with fixed weights. As described in~\eqref{eq:main}, we consider two competing costs (monetary and time) with the additional utilization cost that in turn reduces the waiting time and therefore again the overall time to run a workflow.
Next, we want to verify that monetary and time costs are competing and that the algorithm can be tuned using $w_1$ and $w_2$ to reduce or increase the monetary and basic time costs. Furthermore, we investigate whether increasing the utilization cost factor $w_3$ leads to a reduced waiting time and whether it has any impact on the tuning of the basic time and monetary cost.
Understanding how varying the relative importance of each cost factor influences overall performance is essential when deciding which objective should receive greater emphasis.
Such analyses are commonly conducted through Pareto front evaluation~\cite{ngatchou_pareto_2005}.
The Pareto front represents all points in the multi-dimensional cost space for which improvement in one objective necessarily leads to deterioration in at least one other. As proposed in~\cite{marler_weighted_2010}, the optimal solution obtained from a weighted-sum formulation, such as the placement problem defined in~\eqref{eq:main}, corresponds to a point located on this Pareto front.
To further investigate trade-offs among monetary cost and evaluation time of the workflow, we perform a simulation involving ten workflows deployed on a serverless architecture with attributes given in Table~\ref{tab:param_fixed}. This experiment is conducted for different weights between the two competing costs (eleven values between zero and one for $w_1$ and $w_2=1-w_1$). Additionally, this process is repeated for eleven additional values of $w_3$ between zero and ten.

\begin{table}[htbp]
	
	\centering
	\caption{Overall waiting time in seconds evaluated with increased penalization of utilization $w_3$ for different weight combinations $w_1/w_2$ on the Pareto front.}
	\label{tab:waitings}
	\begin{tabular}{lcccc}
		\hline
		&$w_3=0$&$\joshua{w_3=1}$&$\joshua{w_3=2}$&$\joshua{w_3=3}$\\
		\hline
		$w_1=0.0; w_2=1.0$&\joshua{$2246$}&\joshua{$1568$}&\joshua{$1568$}&\joshua{$1568$}\\
		$w_1=0.3; w_2=0.7$&\joshua{$283852$}&\joshua{$1587$}&\joshua{$1568$}&\joshua{$1568$}\\
		$w_1=0.6; w_2=0.4$&\joshua{$384092$}&\joshua{$2252$}&\joshua{$1742$}&\joshua{$1582$}\\
		$w_1=0.9; w_2=0.1$&\joshua{$232362$}&\joshua{$16224$}&\joshua{$1817$}&\joshua{$1804$}\\
		\hline
	\end{tabular}
	
\end{table}
Figure~\ref{fig:paretos} shows four exemplary Pareto fronts for basic time cost $B$ against the monetary cost $M$. Each example is done for a specific value of $w_3$ and is generated by increasing the weight for basic time cost $w_1$, while reducing the weight of the monetary cost $w_2$. The Pareto fronts show that both cost factors are competing and can be tuned with $w_1$ and $w_2$ for all values of $w_3$. Additionally, Table~\ref{tab:waitings} shows the reduced waiting time with increasing $w_3$ for different combinations of $w_1$ and $w_2$ \joshua{for the lower value experiments of $w_3$}. Without considering the utilization cost for all combinations of $w_1$ and $w_2$, the waiting time is much longer than the basic time $T$. Even for small values of $w_3$ the waiting time reduces fast and remains stable, demonstrating that the utilization penalty effectively reduces waiting time due to blocking in simulation.
\joshua{\subsection{Sensitivity to incorrect model parameters and real world applicability}
	Since the proposed algorithm is evaluated in a high-fidelity simulation and not directly on real workflow deployments, we next discuss the real-world applicability of the approach. The main bottleneck for practical deployment is the uncertainty of essential model parameters, such as request rates, execution times, latencies, and resource demands. Many approaches estimate these values in an initial profiling phase before deployment \cite{GHORBIAN2024103291}. However, these parameters may change during runtime or may be difficult to estimate accurately from short observation windows.
	
	We therefore investigate scenarios in which the parameters used in the simulation differ from the parameters assumed in the placement optimization. The placement remains feasible under such deviations, but the resulting solution may become suboptimal. The investigation is conducted on an architecture with two levels and a total of $100$ physical nodes, as used in the comparison against the heuristic in Subsection~\ref{sub:scaling}, with weights chosen as $w_1=0.5$, $w_2=0.5$, $w_3=1$. To analyze the influence of incorrect parameter estimates, we scale one uncertain parameter at a time in the simulation while keeping the value used in the placement problem fixed. We consider four uncertain parameters: the latencies $L$, the user request rates $\lambda$, the allocated RAM, and the runtime of each function.
	
	In a first step, we analyze the sensitivity of the simulated cost with respect to these parameter deviations.
	\begin{figure}[htbp]
		\centering
		\includegraphics[width=0.48\textwidth]{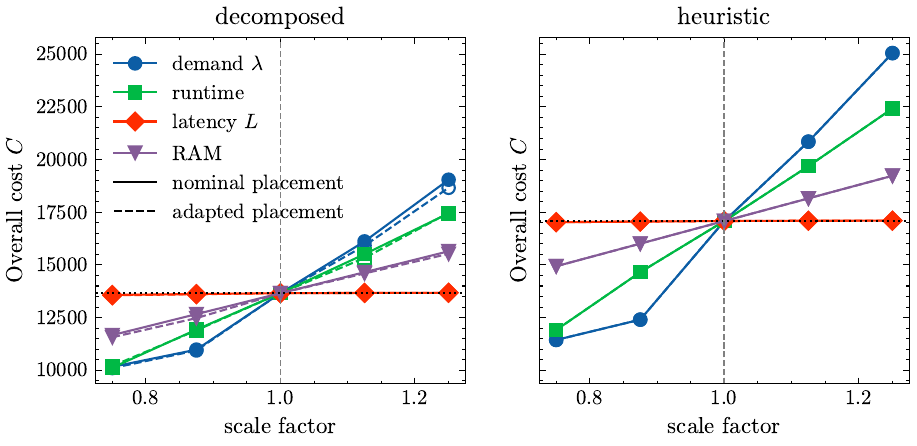}
		\caption{\joshua{Sensitivity analysis for the simulated placement costs for four uncertain parameters: latencies, allocated RAM, function runtimes, and user request rates. Each parameter is scaled in the simulation relative to the value assumed in the placement problem. The comparison is performed against the heuristic introduced in Subsection~\ref{sub:scaling}.}}
		\label{fig:sensitivity}
	\end{figure}
	Figure~\ref{fig:sensitivity} shows that the simulated cost increases monotonically with the scaling factor for all investigated parameters. Thus, if the realized parameters are smaller than assumed, the simulated cost decreases, whereas larger realized values lead to higher costs. This behavior is observed for both the decomposed approach and the heuristic baseline. The monotonic response is important for practical deployment, since it indicates that conservative over-approximations of uncertain parameters can be used to obtain more robust placements against worst-case parameter deviations.
	
	In a second step, we compare the nominal placement, which is computed using the original parameter estimates, with an adapted placement that is recomputed using the true scaled parameter values.
	\begin{figure}[htbp]
		\centering
		\includegraphics[width=0.45\textwidth]{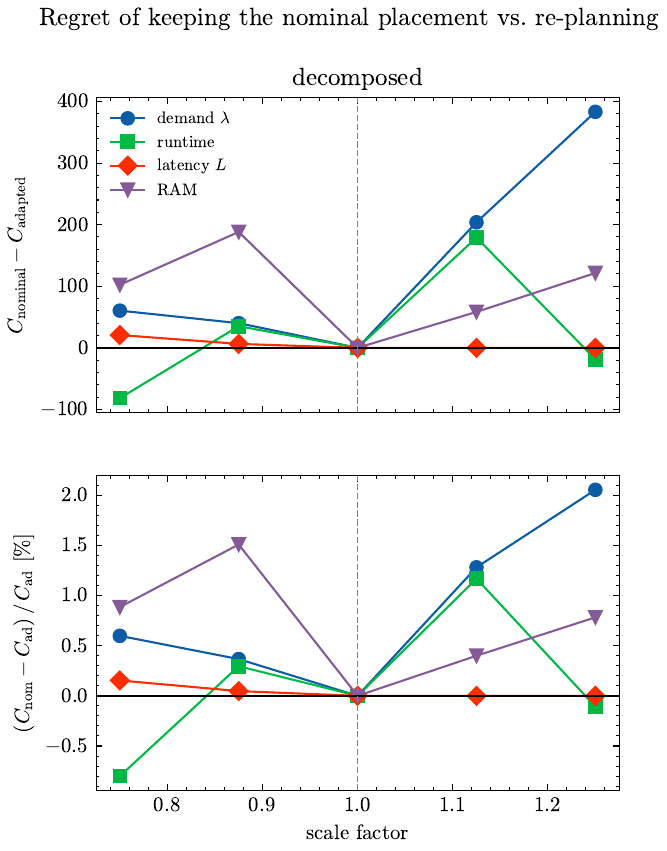}
		\caption{\joshua{Regret of keeping the nominal placement compared with re-planning using the true scaled parameter values. Positive values indicate that re-planning improves the simulated cost.}}
		\label{fig:sensitivity_replanning}
	\end{figure}
	Figure~\ref{fig:sensitivity_replanning} shows that the benefit of re-planning is comparatively small and does not exhibit a clear monotonic trend with the scaling factor. In some cases, adapting the placement to the true parameter values improves the simulated cost, while in other cases the difference is negligible. This behavior is expected, since the utilization term used in the placement problem is a heuristic proxy for the waiting time observed in the stochastic simulation. While the parameters have a significant influence on the cost, there is little than can be done in the optimization to prevent this from happening. 
	Overall, this study show that cost rises monotone with possible uncertain parameters, allowing for worst-case approximations in real world applications to enable robust solutions against parametric uncertainty. Furthermore, the study shows that the difference between choosing the perfect parameters and a deviated estimate is small, further pointing toward real world applicability of the approach.
}

\section{Conclusion}
In this work, we present a modeling framework for the optimal placement of workflows in serverless fog computing architectures. The framework jointly models workflow placement and selection within the architecture and formulates a nonlinear optimization problem that minimizes both monetary cost and execution time. To address decomposed information and the exponential growth of computational demand with increasing numbers of nodes, we introduce a decomposition strategy enabling scalable optimization.
We demonstrate the advantages of the proposed method compared to a centralized formulation and heuristic baseline, as well as its adaptability through a Pareto analysis of competing cost objectives. 
Future work will extend the framework toward \joshua{advanced} probabilistic online selection strategies and incorporate additional cost factors relevant to practical deployment scenarios, such as workflow cold-start delays and energy consumption of serverless nodes.
\joshua{Additionally, real world application requires adaption to incorrect parameter estimation. Future work will address this challenge through recourse action in the placement problem as well as robust optimization problem formulations that allows to deal with uncertain parameter settings.}



\section*{Declaration of generative AI and AI-assisted technologies in the manuscript preparation process}
During the preparation of this work the authors used GPT 5.0-5.\joshua{5} in order to make adjustments to existing text in this paper, as well as small adaptation to coding sections for data storage and figure creation. After using this tool/service, the author(s) reviewed and edited the content as needed and take(s) full responsibility for the content of the published article.
\printbibliography

@article{MOAKHAR2024103890,
title = {An efficient mechanism for function scheduling and placement in function as a service edge environment},
journal = {Journal of Network and Computer Applications},
volume = {226},
pages = {103890},
year = {2024},
}

@article{luo2024efficient,
  title={Efficient and flexible component placement for serverless computing},
  author={Luo, Shouxi and Li, Ke and Xing, Huanlai and Fan, Pingzhi},
  journal={IEEE Systems Journal},
  volume={18},
  number={2},
  pages={1104--1114},
  year={2024},
  publisher={IEEE}
}

@article{neptune,
author = {Baresi, Luciano and Hu, Davide Yi Xian and Quattrocchi, Giovanni and Terracciano, Luca},
title = {NEPTUNE: A Comprehensive Framework for Managing Serverless Functions at the Edge},
year = {2024},
issue_date = {March 2024},
publisher = {Association for Computing Machinery},
address = {New York, NY, USA},
volume = {19},
number = {1},
journal = {ACM Trans. Auton. Adapt. Syst.},
}

@inproceedings{tarot,
  title={TAROT: spatio-temporal function placement for serverless smart city applications},
  author={De Maio, Vincenzo and Bermbach, David and Brandic, Ivona},
  booktitle={2022 IEEE/ACM 15th International Conference on Utility and Cloud Computing (UCC)},
  pages={21--30},
  year={2022},
  organization={IEEE}
}

@article{auctionwhisk,
author = {Bermbach, David and Bader, Jonathan and Hasenburg, Jonathan and Pfandzelter, Tobias and Thamsen, Lauritz},
title = {AuctionWhisk: Using an auction-inspired approach for function placement in serverless fog platforms},
journal = {Software: Practice and Experience},
volume = {52},
number = {5},
pages = {1143-1169},
year = {2022}
}

@INPROCEEDINGS{costless,
  author={Elgamal, Tarek and Sandur, Atul and Nahrstedt, Klara and Agha, Gul},
  booktitle={2018 IEEE/ACM Symposium on Edge Computing (SEC)}, 
  title={Costless: Optimizing Cost of Serverless Computing through Function Fusion and Placement}, 
  year={2018},
  volume={},
  number={},
  pages={300-312},}

@INPROCEEDINGS{smith,
  author={Smith, Christopher Peter and Jindal, Anshul and Chadha, Mohak and Gerndt, Michael and Benedict, Shajulin},
  booktitle={2022 IEEE 6th International Conference on Fog and Edge Computing (ICFEC)}, 
  title={FaDO: FaaS Functions and Data Orchestrator for Multiple Serverless Edge-Cloud Clusters}, 
  year={2022},
  volume={},
  number={},
  pages={17-25},}

@article{DEHURY2024111179,
title = {Def-DReL: Towards a sustainable serverless functions deployment strategy for fog-cloud environments using deep reinforcement learning},
journal = {Applied Soft Computing},
volume = {152},
pages = {111179},
year = {2024},
author = {Chinmaya Kumar Dehury and Shivananda Poojara and Satish Narayana Srirama},
}

@article{xu2023stateful,
  title={Stateful serverless application placement in MEC with function and state dependencies},
  author={Xu, Zichuan and Zhou, Lizhen and Liang, Weifa and Xia, Qiufen and Xu, Wenzheng and Ren, Wenhao and Ren, Haozhe and Zhou, Pan},
  journal={IEEE Transactions on Computers},
  volume={72},
  number={9},
  pages={2701--2716},
  year={2023},
  publisher={IEEE}
}

@article{WANG2024242,
title = {SD-SRF: An Intelligent Service Deployment Scheme for Serverless-operated Cloud-Edge Computing in 6G Networks},
journal = {Future Generation Computer Systems},
volume = {151},
pages = {242-259},
year = {2024},
author = {Luying Wang and Anfeng Liu and Neal N. Xiong and Shaobo Zhang and Tian Wang and Mianxiong Dong},
}

@article{raza2023configuration,
  title={Configuration and placement of serverless applications using statistical learning},
  author={Raza, Ali and Akhtar, Nabeel and Isahagian, Vatche and Matta, Ibrahim and Huang, Lei},
  journal={IEEE Transactions on Network and Service Management},
  volume={20},
  number={2},
  pages={1065--1077},
  year={2023},
  publisher={IEEE}
}

@article{GHORBIAN2024103291,
title = {Function Placement Approaches in Serverless Computing: A Survey},
journal = {Journal of Systems Architecture},
volume = {157},
pages = {103291},
year = {2024},
author = {Mohsen Ghorbian and Mostafa Ghobaei-Arani and Rohollah Asadolahpour-Karimi},

}

@inproceedings{alarbi2024eco,
  title={ECO: edge continuum orchestrator framework for managing serverless chains across the cloud-edge spectrum},
  author={Alarbi, Muhamed and Belson, Robert and Lutfiyya, Hanan},
  booktitle={2024 33rd International Conference on Computer Communications and Networks (ICCCN)},
  year={2024},
}

@inproceedings{mahgoub2021sonic,
  title={$\{$SONIC$\}$: Application-aware data passing for chained serverless applications},
  author={Mahgoub, Ashraf and Shankar, Karthick and Mitra, Subrata and Klimovic, Ana and Chaterji, Somali and Bagchi, Saurabh},
  booktitle={2021 USeNIX annual technical conference},
  pages={285--301},
  year={2021}
}

@misc{gurobi,
  author = {{Gurobi Optimization, LLC}},
  title = {{Gurobi Optimizer Reference Manual}},
  year = 2024,
  url = "https://www.gurobi.com"
}

@article{wang2024briskchain,
  title={BriskChain: Decentralized function composition for high-performance serverless computing},
  author={Wang, Kan and Ma, Jing and Lai, Edmund M-K},
  journal={IEEE Access},
  volume={12},
  pages={131313--131322},
  year={2024},
}

@INPROCEEDINGS{ristov_2022,

  author={Ristov, Sashko and Gritsch, Philipp},

  booktitle={2022 IEEE International Conference on Cluster Computing (CLUSTER)}, 

  title={FaaSt: Optimize makespan of serverless workflows in federated commercial FaaS}, 

  year={2022},

  volume={},

  number={},

  pages={183-194},

 }

@ARTICLE{jia,
  author={Jia, Yunshan and Jin, Chao and Li, Qing and Liu, Xuanzhe and Jin, Xin},
  journal={IEEE Transactions on Networking}, 
  title={FaaSPR: Latency-Oriented Placement and Routing Optimization for Serverless Workflow Processing}, 
  year={2025},
  volume={33},
  number={4},
  pages={2063-2078},
}

@inproceedings{gallego2023machine,
  title={Machine learning inference on serverless platforms using model decomposition},
  author={Gallego, Adrien and Odyurt, Uraz and Cheng, Yi and Wang, Yuandou and Zhao, Zhiming},
  booktitle={Proceedings of the IEEE/ACM 16th International Conference on Utility and Cloud Computing},
  pages={1--6},
  year={2023}
}

@ARTICLE{yue,
  author={Yue, Xiaofei and Yang, Song and Li, Fan and Zhu, Liehuang and Wang, Xu and Feng, Zhen and Kuipers, Fernando A.},
  journal={IEEE Transactions on Parallel and Distributed Systems}, 
  title={HyFaaS: Accelerating Serverless Workflows by Unleashing Hybrid Resource Elasticity}, 
  year={2026},
  volume={37},
  number={1},
  pages={272-286},
  }

@ARTICLE{wang,
  author={Wang, Zhijie and Zhang, Zhen and He, Tengjiao and Xie, Hao},
  journal={IEEE Transactions on Network and Service Management}, 
  title={LPCD: A Parallel Candidate Deployment Strategy in Stateful Serverless Computing With Low Latency}, 
  year={2025},
  volume={22},
  number={6},
  pages={5927-5944},
 }

@INPROCEEDINGS{palade,
  author={Palade, Andrei and Mukhopadhyay, Atri and Kazmi, Aqeel and Cabrera, Christian and Nomayo, Evelyn and Iosifidis, Georgios and Ruffini, Marco and Clarke, Siobhán},
  booktitle={2020 IEEE International Conference on Services Computing (SCC)}, 
  title={A Swarm-based Approach for Function Placement in Federated Edges}, 
  year={2020},
  volume={},
  number={},
  pages={48-50},
  }

@INPROCEEDINGS{majewski,
  author={Majewski, Marcin and Pawlik, Maciej and Malawski, Maciej},
  booktitle={2021 IEEE/ACM 21st International Symposium on Cluster, Cloud and Internet Computing (CCGrid)}, 
  title={Algorithms for scheduling scientific workflows on serverless architecture}, 
  year={2021},
  volume={},
  number={},
  pages={782-789},}

@article{xie_workflow_2023,
	title = {Workflow {Scheduling} in {Serverless} {Edge} {Computing} for the {Industrial} {Internet} of {Things}: {A} {Learning} {Approach}},
	volume = {19},
	copyright = {https://ieeexplore.ieee.org/Xplorehelp/downloads/license-information/IEEE.html},
	shorttitle = {Workflow {Scheduling} in {Serverless} {Edge} {Computing} for the {Industrial} {Internet} of {Things}},
	number = {7},
	journal = {IEEE Transactions on Industrial Informatics},
	author = {Xie, Renchao and Gu, Dier and Tang, Qinqin and Huang, Tao and Yu, Fei Richard},
	month = jul,
	year = {2023},
	pages = {8242--8252},
}

@article{eismann_state_2022,
	title = {The {State} of {Serverless} {Applications}: {Collection}, {Characterization}, and {Community} {Consensus}},
	volume = {48},
	copyright = {https://ieeexplore.ieee.org/Xplorehelp/downloads/license-information/IEEE.html},
	shorttitle = {The {State} of {Serverless} {Applications}},
	number = {10},
	journal = {IEEE Transactions on Software Engineering},
	author = {Eismann, Simon and Scheuner, Joel and Eyk, Erwin Van and Schwinger, Maximilian and Grohmann, Johannes and Herbst, Nikolas and Abad, Cristina L. and Iosup, Alexandru},
	month = oct,
	year = {2022},
	pages = {4152--4166},
}

@inproceedings{sahraei_xfaas_2023,
	address = {Koblenz Germany},
	title = {{XFaaS}: {Hyperscale} and {Low} {Cost} {Serverless} {Functions} at {Meta}},
	shorttitle = {{XFaaS}},
	booktitle = {Proceedings of the 29th {Symposium} on {Operating} {Systems} {Principles}},
	author = {Sahraei, Alireza and Demetriou, Soteris and Sobhgol, Amirali and Zhang, Haoran and Nagaraja, Abhigna and Pathak, Neeraj and Joshi, Girish and Souza, Carla and Huang, Bo and Cook, Wyatt and Golovei, Andrii and Venkat, Pradeep and Mcfague, Andrew and Skarlatos, Dimitrios and Patel, Vipul and Thind, Ravinder and Gonzalez, Ernesto and Jin, Yun and Tang, Chunqiang},
	month = oct,
	year = {2023},
	pages = {231--246},
}

@article{zahed2025efficient,
  title={An efficient function placement approach in serverless edge computing},
  author={Zahed, Atiya and Ghobaei-Arani, Mostafa and Esmaeili, Leila},
  journal={Computing},
  volume={107},
  number={3},
  pages={80},
  year={2025},
  publisher={Springer}
}

@article{rajan_review_2020,
	title = {A review on serverless architectures - function as a service ({FaaS}) in cloud computing},
	volume = {18},
	number = {1},
	journal = {TELKOMNIKA (Telecommunication Computing Electronics and Control)},
	author = {Rajan, Arokia Paul},
	month = feb,
	year = {2020},
	pages = {530},
}

@article{szalay_real-time_2023,
	title = {Real-{Time} {FaaS}: {Towards} a {Latency} {Bounded} {Serverless} {Cloud}},
	volume = {11},
	copyright = {https://ieeexplore.ieee.org/Xplorehelp/downloads/license-information/IEEE.html},
	shorttitle = {Real-{Time} {FaaS}},
	number = {2},
	journal = {IEEE Transactions on Cloud Computing},
	author = {Szalay, Márk and Mátray, Péter and Toka, László},
	month = apr,
	year = {2023},
	pages = {1636--1650},
}

@incollection{yilmaz_multivocal_2020,
	title = {A {Multivocal} {Literature} {Review} of {Function}-as-a-{Service} ({FaaS}) {Infrastructures} and {Implications} for {Software} {Developers}},
	booktitle = {Systems, {Software} and {Services} {Process} {Improvement}},
	author = {Yilmaz, Murat and Niemann, Jörg and Clarke, Paul and Messnarz, Richard},
	year = {2020},
	pages = {58--75},
}

@inproceedings{barcelona-pons_faas_2019,
	address = {Davis CA USA},
	title = {On the {FaaS} {Track}: {Building} {Stateful} {Distributed} {Applications} with {Serverless} {Architectures}},
	shorttitle = {On the {FaaS} {Track}},
	booktitle = {Proceedings of the 20th {International} {Middleware} {Conference}},
	
	author = {Barcelona-Pons, Daniel and Sánchez-Artigas, Marc and París, Gerard and Sutra, Pierre and García-López, Pedro},
	month = dec,
	year = {2019},
	pages = {41--54},
}

@inproceedings{pfandzelter_tinyfaas_2020,
	address = {Sydney, Australia},
	title = {{tinyFaaS}: {A} {Lightweight} {FaaS} {Platform} for {Edge} {Environments}},
	copyright = {https://ieeexplore.ieee.org/Xplorehelp/downloads/license-information/IEEE.html},
	shorttitle = {{tinyFaaS}},
	booktitle = {2020 {IEEE} {International} {Conference} on {Fog} {Computing} ({ICFC})},
	
	author = {Pfandzelter, Tobias and Bermbach, David},
	month = apr,
	year = {2020},
	pages = {17--24},
}

@book{tawarmalani_convexification_2002,
	address = {Boston, [Massachusetts]},
	series = {Nonconvex {Optimization} and its {Applications}},
	title = {Convexification and global optimization in continuous and mixed-integer nonlinear programming: theory, algorithms, software, and applications},
	shorttitle = {Convexification and global optimization in continuous and mixed-integer nonlinear programming},
	language = {eng},
	number = {Volume 65},
	author = {Tawarmalani, Mohit and Sahinidis, Nikolaos V.},
	year = {2002},
}

@article{schultz_twostage_1996,
	title = {Two‐stage stochastic integer programming: a survey},
	volume = {50},
	shorttitle = {Two‐stage stochastic integer programming},
	number = {3},
	journal = {Statistica Neerlandica},
	author = {Schultz, R. and Stougie, L. and Van Der Vlerk, M. H.},
	month = nov,
	year = {1996},
	pages = {404--416},
}

@article{marler_weighted_2010,
	title = {The weighted sum method for multi-objective optimization: new insights},
	volume = {41},
	copyright = {http://www.springer.com/tdm},
	shorttitle = {The weighted sum method for multi-objective optimization},
	number = {6},
	journal = {Structural and Multidisciplinary Optimization},
	author = {Marler, R. Timothy and Arora, Jasbir S.},
	month = jun,
	year = {2010},
	pages = {853--862},
}

@inproceedings{ngatchou_pareto_2005,
	address = {Arlington, Virginia, USA},
	title = {Pareto {Multi} {Objective} {Optimization}},
	booktitle = {Proceedings of the 13th {International} {Conference} on, {Intelligent} {Systems} {Application} to {Power} {Systems}},
	author = {Ngatchou, P. and Zarei, A. and El-Sharkawi, A.},
	year = {2005},
	pages = {84--91},
}

\end{document}